\crefname{algocf}{Algorithm}{Algorithms}
\Crefname{algocf}{Algorithm}{Algorithms}
\renewcommand{\emph}[1]{\textbf{\textit{#1}}}
\title{Reconstructing Biological and Digital Phylogenetic Trees in Parallel}
\titlerunning{Reconstructing Phylogenetic Trees in Parallel}
\author{Ramtin Afshar}{University of California-Irvine, USA}{afsharr@uci.edu}{https://orcid.org/0000-0003-4740-1234}{}
\author{Michael T.~Goodrich}{University of California-Irvine, USA}{goodrich@uci.edu}{https://orcid.org/0000-0002-8943-191X}{}
\author{Pedro Matias}{University of California-Irvine, USA}{pmatias@uci.edu}{https://orcid.org/0000-0003-0664-9145}{}
\author{Martha C.~Osegueda}{University of California-Irvine, USA}{mosegued@uci.edu}{https://orcid.org/0000-0002-1077-1074}{}
\authorrunning{R. Afshar, M. T. Goodrich, P. Matias, and M. C. Osegueda}
\keywords{Tree Reconstruction, Parallel Algorithms, Privacy, Phylogenetic Trees, Data Structures, Hierarchical Clustering}
\begin{document}
\maketitle

\begin{abstract}

In this paper, we study the parallel query complexity of reconstructing 
biological and digital phylogenetic
trees from simple queries involving their nodes.
This is motivated from computational biology, 
data protection, and computer security 
settings, which can be abstracted
in terms of two parties, a \emph{responder}, Alice,
who must correctly answer queries of a given type
regarding a degree-$d$ tree, $T$,
and a \emph{querier}, Bob, who issues batches of queries, with each
query in a batch being independent of the others, so as to eventually
infer the structure of $T$.
We show that a querier can efficiently reconstruct an $n$-node
degree-$d$ tree, $T$, with a logarithmic number of rounds and 
quasilinear number of queries, with high probability, for various
types of queries, including \emph{relative-distance queries}
and \emph{path queries}.
Our results are all asymptotically optimal 
and improve the asymptotic (sequential) query complexity
for one of the problems we study.
Moreover, through an experimental analysis using both real-world
and synthetic data, we provide empirical evidence that our algorithms
provide significant parallel speedups while also
improving the total query complexities for the problems we study.

\end{abstract}

\setcounter{page}{1}

\clearpage

\section{Introduction}\label{sec:intro}
Phylogenetic trees
represent evolutionary relationships among a group of objects.
For instance,
each node in a biological phylogenetic tree represents a biological
entity, such as a species, bacteria, or virus, and the
branching represents how the entities are believed to have
evolved from common ancestors~\cite{DBLP:journals/jal/KannanLW96, DBLP:conf/icalp/BrodalFPO01, pagel1999inferring}.
(See \cref{fig:phylogen_life}.)
In a digital phylogenetic tree, 
on the other hand, each node represents a data object,
such as a computer virus~\cite{DBLP:journals/jal/GoldbergGPS98,DBLP:conf/malware/PfefferCCKOPZLGBHS12},
a source-code file~\cite{ji2008generating},
a text file or document~\cite{marmerola2016reconstruction,DBLP:journals/access/ShenFRS18},
or a multimedia object (such as an image or 
video)~\cite{DBLP:conf/icassp/BestaginiTT16,DBLP:journals/tifs/DiasRG12,DBLP:journals/ieeemm/DiasGR13,DBLP:journals/jvcir/DiasGR13} 
and the branching 
represents how these objects are believed to have evolved through edits
or data compression/corruption.
(See \cref{fig:phylogen_digital}.)
    
\begin{figure}[t!]
\begin{subfigure}[t]{0.49\textwidth}
  \centering
    \includegraphics[width=.85\linewidth]{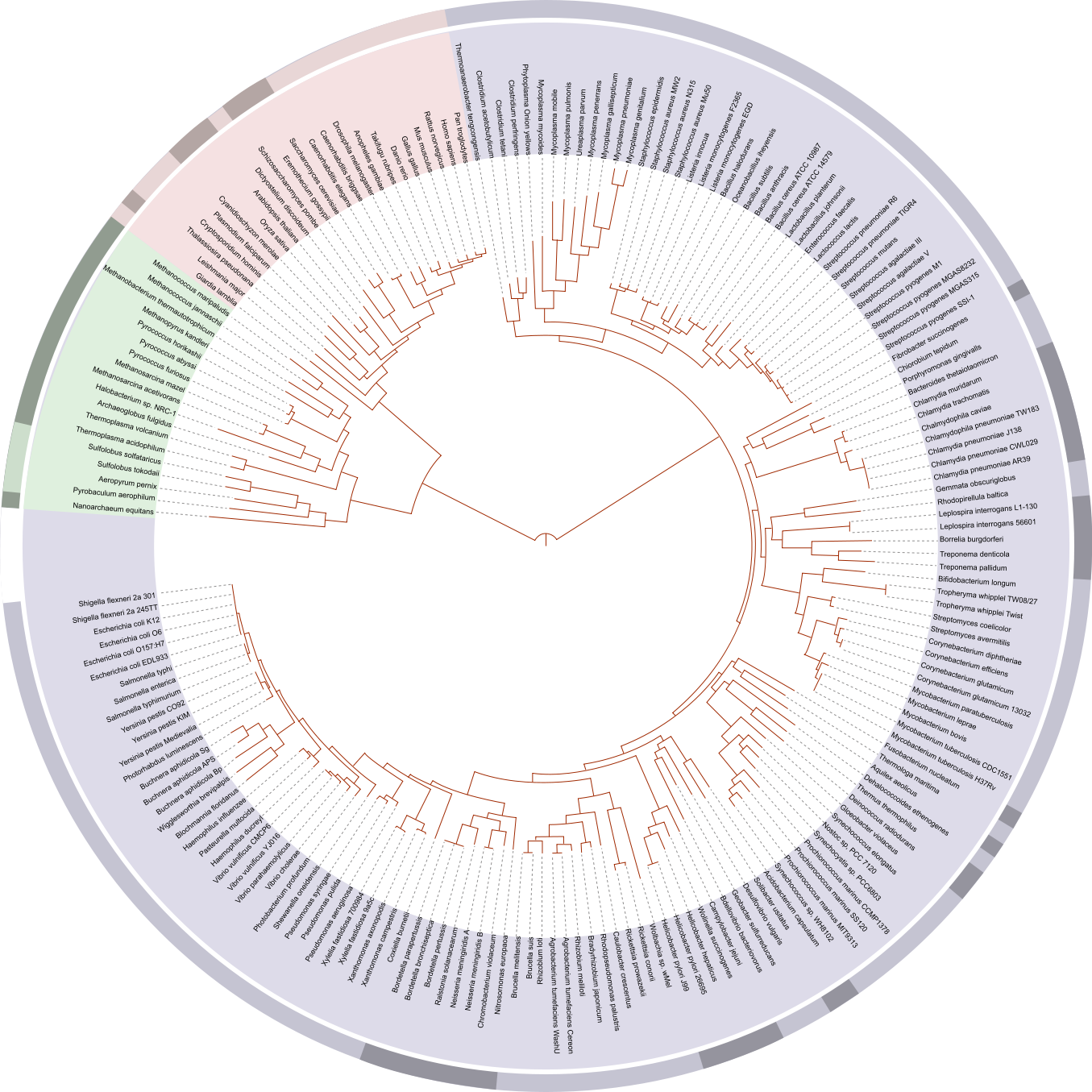}
    \caption{}
    \label{fig:phylogen_life}
\end{subfigure}
\begin{subfigure}[t]{0.49\textwidth}
      \centering
    \includegraphics[width=.85\linewidth]{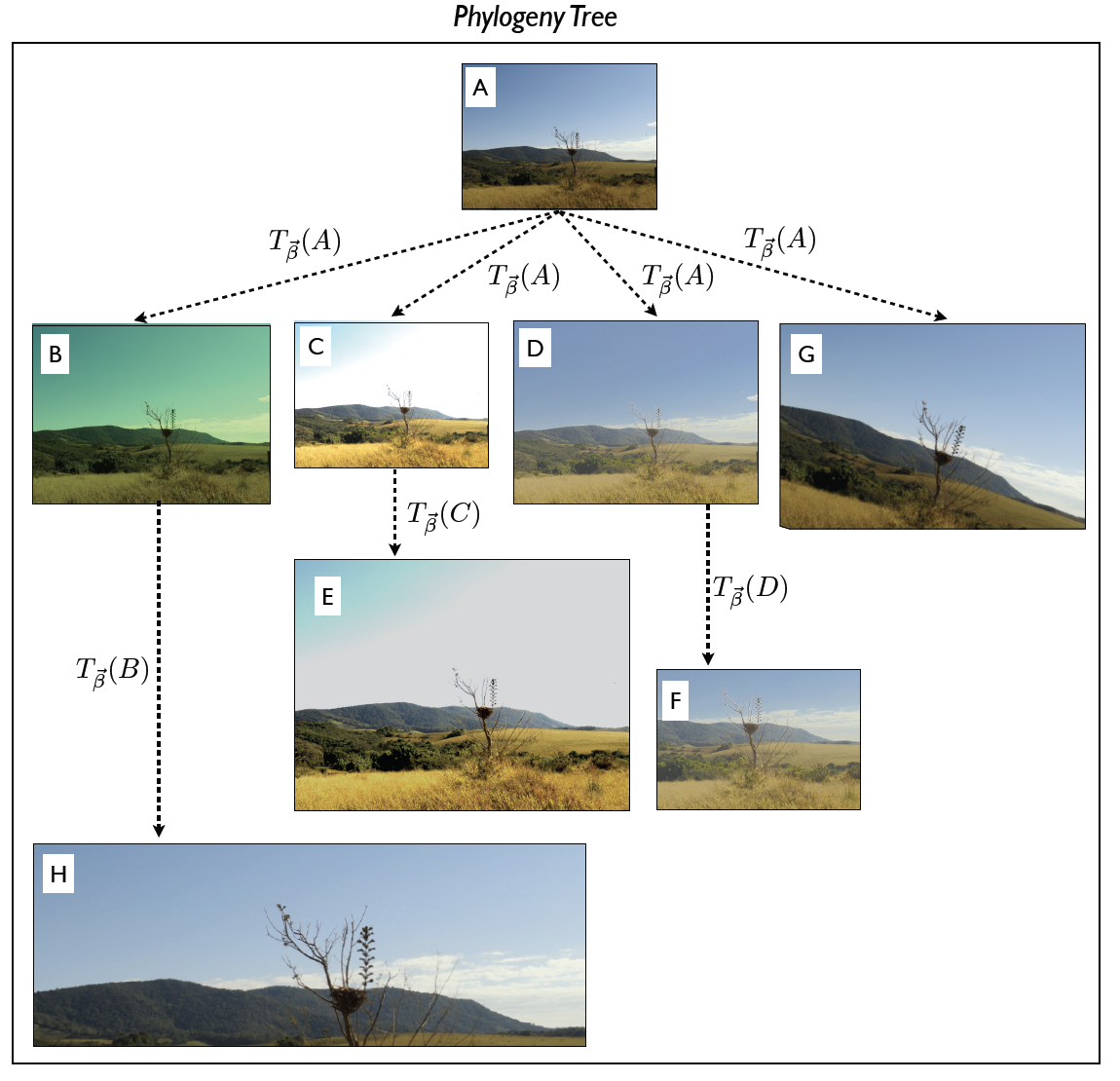}
    \caption{}
    \label{fig:phylogen_digital}
\end{subfigure}
\caption{Two phylogenetic trees.
(a) A biological phylogenetic tree of life, showing relationships between species
whose genomes had been sequenced as of 2006; public domain image
by Ivica Letunic, retraced by Mariana Ruiz Villarreal.
(b) A digital phylogenetic tree of images, from
Dias {\it et al.}~\protect{\cite{DBLP:journals/jvcir/DiasGR13}}.}
\label{fig:phylogen}
\end{figure}

In this paper, 
we are interested in studying efficient methods for 
reconstructing phylogenetic trees
from queries regarding their structure, noting that
there are differences in the types of queries one may
perform on the two types of phylogenetic trees.
In particular, with some exceptions,\footnote{One notable
  exception to this restriction of only being able to ask queries
  involving leaves in a biological phylogenetic tree is for phylogenetic
  trees of biological viruses, for which genetic sequencing may be known
  for all instances; hence,
  ancestor-descendant path queries might also be appropriate for
  reconstructing some biological phylogenetic trees.}
in a biological phylogenetic tree we can only perform queries 
involving the leaves
of the tree, since these typically represent living biological entities 
and internal nodes
represent ancestors that are likely to be extinct.
In digital phylogenetic trees, on the other hand, we can perform queries
involving any of the nodes in the tree, including internal nodes,
since these represent digital artifacts, which are often archived. The former type of phylogenetic tree has also received attention in the context of \textit{hierarchical clustering}, where the goal is to provide a hierarchical grouping structure of items according to their similarity \cite{DBLP:conf/soda/Emamjomeh-Zadeh18}.
To support reconstruction of both biological and digital phylogenetic trees,
therefore, we study both types of querying regimes in this paper.

More specifically, with respect to biological phylogenetic trees, we
focus on \emph{relative-distance} queries, 
where one is given three leaf
nodes (corresponding to species), $x$, $y$, and $z$, and the response is a 
determination of which pair, $(x,y)$, $(x,z)$, or $(y,z)$, is 
a closest pair, hence, has the most-recent common 
ancestor~\cite{DBLP:journals/jal/KannanLW96}.
With respect to digital phylogenetic trees, we instead focus on
\emph{path queries}, where one is given two nodes, $v$ and $w$, in the tree
and the response is ``true'' if and only if $v$ is an ancestor of $w$.

The motivation for reconstructing phylogenetic
trees comes from a desire to better understand
the evolution of the objects represented in a given phylogenetic tree.
For example, understanding how biological species evolved is useful for 
understanding and categorizing the fossil record and understanding when 
species are close 
relatives~\cite{DBLP:journals/jal/KannanLW96,DBLP:conf/soda/Emamjomeh-Zadeh18}.
Similarly, understanding how digital objects
have been edited and transformed can be useful for data protection,
computer security,
privacy, copyright disputes, and plariarism 
detection~\cite{DBLP:journals/jal/GoldbergGPS98,DBLP:conf/malware/PfefferCCKOPZLGBHS12,ji2008generating,
marmerola2016reconstruction,DBLP:journals/access/ShenFRS18,DBLP:conf/icassp/BestaginiTT16,DBLP:journals/tifs/DiasRG12,DBLP:journals/ieeemm/DiasGR13,DBLP:journals/jvcir/DiasGR13}.
For instance, understanding 
the evolutionary process of a computer virus
can provide insights into its ancestry, characteristics of the attacker, 
and where future attacks might come from and what they 
might look like~\cite{DBLP:conf/malware/PfefferCCKOPZLGBHS12}. 

The efficiency of a tree reconstruction algorithm can be characterized
in terms of its \emph{query-complexity} measure, $Q(n)$,
which is the total number of queries of a certain type needed to reconstruct
a given tree.
This parameter
comes from machine-learning and complexity theory,
e.g., see~\cite{DBLP:conf/birthday/AfshaniADDLM13,DBLP:journals/ai/ChoiK10,DBLP:conf/stoc/DobzinskiV12,DBLP:journals/combinatorica/Tardos89}, where it is also known as
``decision-tree complexity,'' e.g.,
see~\cite{DBLP:conf/stoc/Yao94,DBLP:journals/iandc/BernasconiDS01}.
Previous work on tree reconstruction
has focused on sequential methods,
where queries are issued and answered one at a time.
For example, in pioneering work for this research area, 
Kannan {\it et al.}~\cite{DBLP:journals/jal/KannanLW96} show that 
an $n$-node biological phylogenetic tree
can be reconstructed sequentially from $O(n\log n)$ 
three-node {relative-distance} queries.

Indeed, their reconstruction algorithms are inherently sequential and involve 
incrementally inserting leaf nodes into the phylogenetic tree reconstructed for
the previously-inserted nodes.

In many tree reconstruction applications, queries 
are expensive~\cite{DBLP:journals/jal/KannanLW96,DBLP:conf/soda/Emamjomeh-Zadeh18,%
DBLP:journals/jal/GoldbergGPS98,DBLP:conf/malware/PfefferCCKOPZLGBHS12,ji2008generating,%
marmerola2016reconstruction,DBLP:journals/access/ShenFRS18,DBLP:conf/icassp/BestaginiTT16,DBLP:journals/tifs/DiasRG12,DBLP:journals/ieeemm/DiasGR13,DBLP:journals/jvcir/DiasGR13},
but can be issued in batches.
For example, there is nothing preventing the 
biological experiments~\cite{DBLP:journals/jal/KannanLW96} 
that are represented in three-node relative-distance queries 
from being issued in parallel.
Thus, in order to speed up tree reconstruction,
in this paper we are interested in parallel tree reconstruction. 
To this end,
we also use a \emph{round-complexity}
parameter, $R(n)$, which measures the number of rounds of queries
needed to reconstruct a tree such that the queries issued in any round
comprise a batch of independent queries. That is, no query issued in a given round
can depend on the outcome of another query issued in that round, although both can
depend on answers to queries issued in previous rounds.
Roughly speaking, $R(n)$ corresponds to the span of a parallel reconstruction
algorithm and $Q(n)$ corresponds to its work. In this paper, we are interested in studying complexities for 
$R(n)$ and $Q(n)$ with respect to biological and digital phylogenetic 
trees with fixed maximum degree, $d$.

\subsection{Related Work}

The general problem of reconstructing graphs from distance queries was 
studied by Kannan {\it et al.}~\cite{DBLP:journals/talg/KannanMZ18},
who provide a randomized algorithm for reconstructing a graph of $n$
vertices using $\tilde O(n^{3/2})$
distance queries.\footnote{The $\tilde{O}(\cdot)$ notation hides poly-logarithmic factors.}

Previous parallel work has focused on inferring phylogenetic trees through Bayesian estimation \cite{DBLP:journals/bioinformatics/AltekarDHR04}.
However, we are not aware of previous parallel work using a similar query models to ours.
With respect to previous work on sequential tree reconstruction,
Culberson and Rudnicki \cite{DBLP:journals/ipl/CulbersonR89} provide the 
first sub-quadratic algorithms for reconstructing a weighted undirected tree 
with $n$ vertices and bounded degree $d$ from additive queries, 
where each query returns the sum of the weights of the edges of the path 
between a given pair of vertices. 
Reyzin and Srivastava~\cite{DBLP:journals/ipl/ReyzinS07} show that 
the Culberson-Rudnicki algorithm uses 
$O(n^{3/2} \cdot \sqrt{d})$ queries. 

Waterman {\it et al.}~\cite{waterman1977additive}
introduce the problem of reconstructing biological phylogenetic trees, 
using additive queries, which are more powerful than
relative-distance queries. Hein~\cite{hein1989optimal} 
shows that this problem has a solution 
that uses $O(d n\log_d n)$ additive queries, when the tree has maximum degree $d$, which is asymptotically optimal~\cite{DBLP:conf/soda/KingZZ03}.
Kannan {\it et al.}~\cite{DBLP:journals/jal/KannanLW96} show that 
an $n$-node binary phylogenetic tree can be reconstructed from $O(n\log n)$
three-node relative-distance queries.
Their method appears inherently sequential, however, as it is based on an
incremental approach that mimics insertion-sort.
Similarly, Emamjomeh-Zadeh and Kempe~\cite{DBLP:conf/soda/Emamjomeh-Zadeh18} also give
a sequential method using relative-distance queries 
that has a query complexity of $O(n\log n)$. Their algorithm, however, was designed for a different context, namely, hierarchical clustering.

Additionally, there exists some work (e.g. \cite{jones2004introduction,DBLP:conf/ccece/BhattacharjeeSS06a,huelsenbeck1995performance}) in an alternative perspective of the problem reconstructing phylogenetic trees, in which the goal is to find the best tree explaining the similarity and the relationship between a given fixed (or dynamic) set of data sequences (e.g. of species), using Maximum Parsimony \cite{10.1093/sysbio/19.1.83,10.2307/2412116,DBLP:journals/classification/Rohlf05} or Maximum Likelihood \cite{felsenstein1981evolutionary,DBLP:conf/ismb/ChorT05}. This contrasts with our approach of recovering the ``ground truth'' tree known only to an oracle, which is consistent with its answers about the tree.

With respect to digital phylogenetic tree reconstruction, there are a number
of sequential algorithms with $O(n^2)$ query complexities, including
the use of what we are calling path queries, where
the queries are also individually expensive, e.g., 
see~\cite{DBLP:journals/jal/GoldbergGPS98,DBLP:conf/malware/PfefferCCKOPZLGBHS12, ji2008generating,%
marmerola2016reconstruction,DBLP:journals/access/ShenFRS18,DBLP:conf/icassp/BestaginiTT16,DBLP:journals/tifs/DiasRG12,DBLP:journals/ieeemm/DiasGR13,DBLP:journals/jvcir/DiasGR13}. Jagadish and Sen~\cite{DBLP:conf/alt/JagadishS13}
consider reconstructing undirected 
unweighted degree-$d$ trees, giving
a deterministic algorithm that requires $O(dn^{1.5} \log n)$ separator queries, 
which answer if a vertex lies on the path between two vertices. 
They also give a randomized algorithm using an expected $O(d^2 n\log^2 n)$ number of
separator queries, 
and they give an $\Omega(dn)$ lower bound for any deterministic algorithm.
Wang and Honorio~\cite{DBLP:conf/allerton/WangH19}
consider the problem of reconstructing bounded-degree rooted trees, 
giving
a randomized algorithm that uses expected $O(dn\log^2 n)$ path queries.
They also prove that any randomized algorithm requires 
$\Omega(n\log n)$ path queries.

\medskip\noindent
\textsf{\textbf{Our Contributions.}}~~In this paper, 
we study the parallel phylogenetic tree reconstruction problem 
with respect to the two different types of queries
mentioned above:
\begin{itemize}
\item
We show that an $n$-node rooted biological (binary)
phylogenetic tree can be reconstructed 
from three-node \emph{relative-distance queries} with
$R(n)$ that is $O(\log n)$ and $Q(n)$ that is $O(n\log n)$, with high probability
(w.h.p.)\footnote{
  We say that an event occurs with high probability if it occurs with probability
  at least $1-1/n^c$, for some constant $c\ge 1$.}.
Both bounds are asymptotically optimal.

\item
We show that an $n$-node fixed-degree digital phylogenetic tree
can be reconstructed from \emph{path queries},
which ask whether a given node, $u$, is an ancestor of a given node, $w$,
with
$R(n)$ that is $O(\log n)$ and $Q(n)$ that is $O(n\log n)$, w.h.p. We also provide an $\Omega( d n + n \log n)$ lower bound for any randomized or deterministic algorithm suggesting that our algorithm is optimal in terms of query complexity and round complexity.
Further, this asymptotically-optimal $Q(n)$ bound
actually improves the sequential complexity for this problem,
as the previous best bound for $Q(n)$, due to
Wang and Honorio~\cite{DBLP:conf/allerton/WangH19}, had a $Q(n)$ bound
of $O(n\log^2 n)$ for reconstructing fixed-degree rooted trees using
path queries.
Of course, our method also
applies to biological phylogenetic trees that support path queries.
\end{itemize}

A preliminary announcement of some of this paper's results,
restricted to binary trees, was presented in \cite{DBLP:conf/spaa/AfsharGMO20}. Most of our algorithms are quite simple, although their analyses are at 
times nontrivial.
Moreover, given the many applications 
of biological and digital phylogenetic tree reconstruction, 
we feel that our algorithms have real-world applications.
Thus,
we have done an extensive experimental analysis of our algorithms,
using both real-world and synthetic data for biological and digital
phylogenetic trees. Our experimental results provide empirical
evidence that our methods achieve significant parallel speedups while
also providing improved query complexities in practice.

\section{Preliminaries}
In graph theory,
an \emph{arborescence} is a directed graph, $T$,
with a distinguished vertex, $r$, called the \emph{root}, such that,
for any vertex $v$ in $T$ that is not the root, there is exactly one path
from $r$ to $v$, e.g., see Tutte~\cite{tutte2001graph}.
That is, an arborescence is a graph-theoretic way of describing a
rooted tree, so that all the edges are going away from the root. 
In this paper, when we refer to a ``rooted tree'' it should be understood 
formally to be an arborescence.

We represent a rooted tree as $T=(V,E,r)$,  with 
a vertex set $V$, edge set $E$, and root $r \in V$. 
The \emph{degree} of a vertex in such a tree
is the sum of its in-degree and out-degree, and the degree of
a tree, $T$, is the maximum degree of all vertices in $T$.
So, an arborescence representing a binary tree would have degree 3.
Because of the motivating applications, e.g., from computational biology,
we assume in this paper that the trees we want to reconstruct 
have maximum degree that is bounded by a fixed constant, $d$.

Let us review a few terms regarding rooted trees.

\begin{definition}{(ancestry)} \label{def:ancestry}
Given a rooted tree, $T=(V,E,r)$,
we say $u$ is \emph{parent} of $v$ (and $v$ is a
\emph{child} of $u$) if there exists a directed edge $(u,v)$ in
$E$. The \emph{ancestor} relation is the transitive closure of the
parent relation, and the \emph{descendant} relation is the transitive
closure of the child relation.
We denote the number of descendants of vertex $s$ by $D(s)$.
A node without any children is called a \emph{leaf}.
Given two leaf nodes, $u$ and $v$ in $T$, their \emph{lowest common ancestor},
$lca(u,v)$, is the node, $w$ in $T$, that is an ancestor of both $u$ and $v$
and has no child that is also an ancestor of $u$ and $v$.
\end{definition}

We next define the types of queries we consider in this paper
for reconstructing a rooted tree, $T=(V,E,r)$.

\begin{definition}
\label{def:relative_distance}
A \emph{relative-distance query} for $T$ is a function,
$closer$, which takes three leaf nodes, $u$, $v$, and $w$ in $T$,
as input and returns the pair of nodes from the set, $\{u,v,w\}$,
that has
the lower lowest common ancestor.
That is, 
$closer(u,v,w)=(u,v)$ if $lca(u,v)$ is a descendant of $lca(u,w)=lca(v,w)$.
Likewise,
we also have that
$closer(u,v,w)=(u,w)$ if $lca(u,w)$ is a descendant of $lca(u,v)=lca(v,w)$,
and
$closer(u,v,w)=(v,w)$ if $lca(v,w)$ is a descendant of $lca(u,v)=lca(u,w)$.
\end{definition}

\cref{def:relative_distance} assumes $T$ is a binary tree (of degree~$3$). Note that
in this paper we restrict relative-distance queries to leaves, since
these represent, e.g., current species in the application of reconstructing
biological phylogenetic trees.

\begin{definition}
A \emph{path query} for $T$ is a function,
$path$, that takes two nodes, $u$ and $v$ in $T$, as input
and returns $1$ if there is a (directed) path from vertex $u$ to $v$, and
otherwise returns $0$. 
Also, for $u \in V$ and $W \subseteq V$,
we define 
$count(u,W) = \sum_{v \in W} path(u,v)$, which is the number of descendants of $u$ 
in $W$.
\end{definition}

We next study some preliminaries involving the structure of degree-$d$
rooted trees that will
prove useful for our parallel algorithms.

\begin{definition}\label{def:even_separator}
Let $T=(V,E,r)$ be a degree-$d$ rooted tree. We say that an
edge $e=(x,y) \in E$ is an \emph{even-edge-separator} if removing $e$ from
$T$ partitions it into two rooted trees,
$T^{\prime}=(V^{\prime},E^{\prime},y)$ and $T^{\prime \prime}=(V^{\prime
\prime},E^{\prime \prime},r)$, such that $ \frac{|V|}{d} \leq
|V^{\prime}|  \leq \frac{|V|  (d-1)}{d} $ and $ \frac{|V|}{d} \leq
|V^{\prime \prime}| \leq \frac{|V|  (d-1)}{d}$. (See
\cref{fig:subproblems}.)

\end{definition} 

\begin{lemma}\label{lem:has-separator}
  Every rooted tree of degree-$d$ has an even-edge-separator.
\end{lemma}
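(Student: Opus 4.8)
The plan is to exhibit such an edge explicitly via a greedy ``heaviest-child'' descent from the root. Set $n=|V|$ and, for a vertex $s$, let $\sigma(s)=D(s)+1$ denote the size of the subtree rooted at $s$, so $\sigma(r)=n$ and $\sigma(s)=1$ for each leaf. Starting at $r=v_0$, I repeatedly move from $v_i$ to a child $v_{i+1}$ maximizing $\sigma(v_{i+1})$ (breaking ties arbitrarily), stopping at a leaf $v_\ell$. Since $v_i$ is not a descendant of $v_{i+1}$, the values $\sigma(v_0)=n>\sigma(v_1)>\cdots>\sigma(v_\ell)=1$ are strictly decreasing.

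Next I would let $j$ be the smallest index with $\sigma(v_j)\le \frac{(d-1)n}{d}$; this exists (once $n\ge 2$) because $\sigma(v_\ell)=1$ qualifies, and $j\ge 1$ because $\sigma(v_0)=n>\frac{(d-1)n}{d}$. I claim $e=(v_{j-1},v_j)$ is an even-edge-separator. Deleting $e$ yields the subtree $T'$ rooted at $v_j$ with $|V'|=\sigma(v_j)$ and the complementary tree $T''$ rooted at $r$ with $|V''|=n-\sigma(v_j)$, so since $|V'|+|V''|=n$ it suffices to check $\frac{n}{d}\le \sigma(v_j)\le\frac{(d-1)n}{d}$. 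The upper bound is exactly the defining property of $j$. For the lower bound I use the degree constraint at $v_{j-1}$: by minimality of $j$, $\sigma(v_{j-1})>\frac{(d-1)n}{d}$, and $v_{j-1}$ has at most $d-1$ children (it has an edge to its parent, unless it is $r$, in which case it has at most $d$). The subtrees rooted at those children partition the $\sigma(v_{j-1})-1$ proper descendants of $v_{j-1}$, and $v_j$ is the heaviest among them, so
\[
\sigma(v_j)\ \ge\ \frac{\sigma(v_{j-1})-1}{d-1}\ >\ \frac{1}{d-1}\left(\frac{(d-1)n}{d}-1\right)\ =\ \frac{n}{d}-\frac{1}{d-1},
\]
and since $\sigma(v_j)$ is an integer this gives the claimed lower bound (using $\sigma(v_j)\ge (n-1)/d$ directly in the $v_{j-1}=r$ case).

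The only genuinely delicate point is the fencepost/rounding bookkeeping in that last step: the pigeonhole estimate loses an additive constant and the root is permitted one extra child, so strictly speaking the endpoints $\frac{n}{d}$ and $\frac{(d-1)n}{d}$ should be read with the appropriate floor/ceiling (equivalently, one can first reduce to the case $d \mid n$). This is harmless for the way the lemma is used downstream — all the recursion needs is that both pieces have size $\Theta(n)$ with the split ratio bounded strictly between $0$ and $1$ — so I do not expect a real obstacle beyond making that bookkeeping precise. The structural heart of the argument, namely the heaviest-child descent together with the pigeonhole over the bounded set of children, is routine.
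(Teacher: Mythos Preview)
Your argument is sound, and indeed the heaviest-child descent with pigeonhole over the (at most $d-1$, or $d$ at the root) children is the standard way to prove edge-separator existence directly. The paper, by contrast, does not give a proof at all: it simply cites Valiant's Lemma~2 and moves on. So your write-up is strictly more informative than what appears in the paper, at the cost of a paragraph of space. The rounding caveat you raise is genuine---for instance a root with $d$ leaf children already shows the exact inequalities $\tfrac{n}{d}\le|V'|\le\tfrac{(d-1)n}{d}$ can fail by an additive $O(1)$---and it applies equally to the lemma \emph{statement} as written, not just to your proof; your observation that the downstream recursion only needs a constant-fraction split is the right way to dispose of it. One small correction: the sentence ``since $\sigma(v_j)$ is an integer this gives the claimed lower bound'' overstates what the integrality buys you (e.g.\ $\sigma(v_j)>\tfrac{n}{d}-\tfrac{1}{d-1}$ does not force $\sigma(v_j)\ge\lceil\tfrac{n}{d}\rceil$ in general), but since you immediately flag the floor/ceiling issue anyway, this is cosmetic rather than a gap.
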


\begin{proof}
This follows from a result by Valiant~\cite[Lemma 2]{DBLP:journals/tc/Valiant81}.
\end{proof}
\begin{figure}[t]
  \centering
  \includegraphics[width=.9\columnwidth]{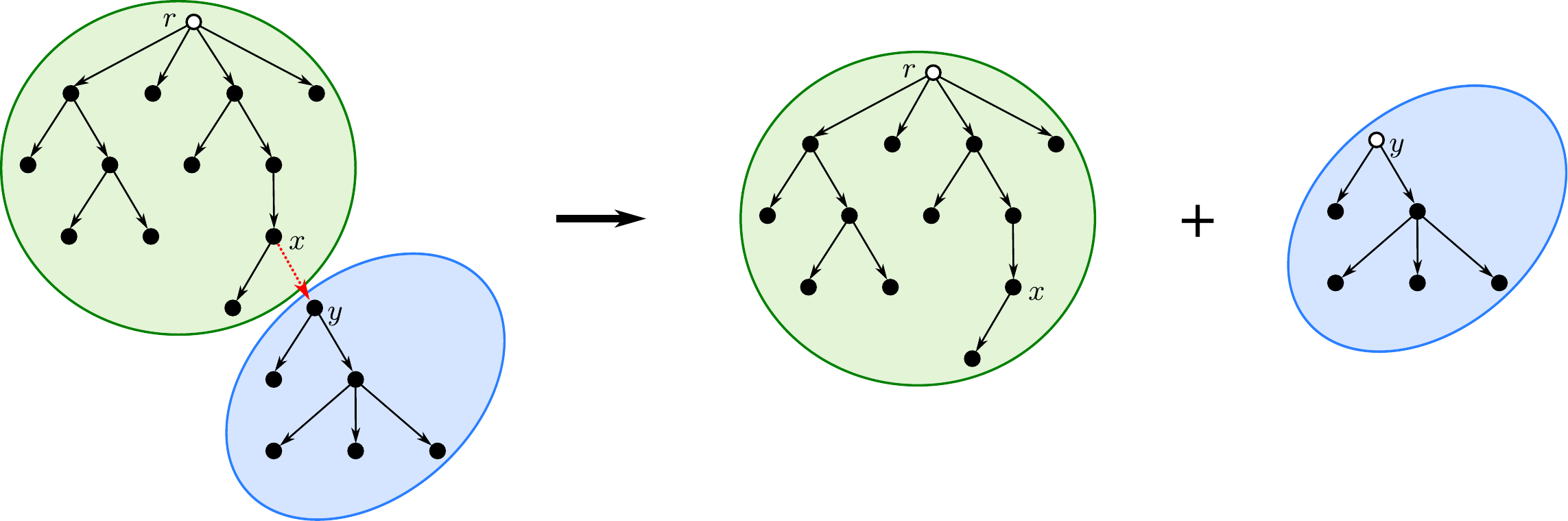}
  \caption{Illustration of a divide-conquer approach for trees. 
    The edge $(x,y)$ is an even-edge-separator. Note that the root of $T^{\prime \prime}$ is $r$, while $y$ becomes root of $T^{\prime }$.}
  \label{fig:subproblems}
\end{figure}

As we will see, this fact is useful for designing simple
parallel divide-and-conquer algorithms. 
Namely, if we can find an even-edge-separator, 
then we can cut the tree in two by removing that edge and recurse on the two 
remaining subtrees in parallel (see \cref{fig:subproblems}).

\section{Reconstructing Biological Phylogenetic Trees in Parallel}
\label{sec:phylo}
\emph{Relative-distance} queries model an experimental approach
to constructing a biological phylogenetic tree, e.g., where DNA sequences are compared
to determine which samples are the most similar~\cite{DBLP:journals/jal/KannanLW96}. 
In simple terms, pairs of DNA sequences that are closer to one another 
than to a third sequence are assumed to be from two species 
with a common ancestor that is more recent than the common ancestor of all three.
In this section,
for the sake of tree reconstruction, we assume the \emph{responder} has
knowledge of the absolute structure of a 
rooted binary phylogenetic tree; hence, each response to a
$closer(u,v,w)$ query is assumed accurate with respect to an unknown
rooted binary tree, $T$.  As in the pioneering work of Kannan {\it et al.}~\cite{DBLP:journals/jal/KannanLW96},
we assume the distance comparisons are accurate and consistent.
The novel dimension here is that we consider parallel algorithms for phylogenetic 
tree reconstruction.

As mentioned above, we consider relative-distance queries to occur between
leaves of a rooted binary tree, $T$.
That is, in our query model, 
the \emph{querier} has no knowledge of the internal nodes of $T$
and can only perform queries using leaves. 
Because $T$ is a binary phylogenetic tree, we may assume it is
a \emph{proper} binary tree, where each internal node in $T$ has exactly two children.

\subsection{Algorithm}
At a high level, our parallel 
reconstruction algorithm 
(detailed in \cref{phylo:overview})
uses a randomized divide-and-conquer approach, similar to \cref{fig:subproblems}.
In our case, however, the division process is random three-way split through a vertex separator,
rather than an edge-separator-based binary split.
Initially, all leaves belong to a single partition, $L$. 
Then two leaves, $a$ and $b$, are
chosen uniformly at random from $L$ and each 
remaining leaf, $c$, is queried in parallel
against them using relative-distance queries. 
Notice that the lowest common ancestor of $a$ and $b$ splits the tree into three parts.
Given $a$ and $b$,
the other leaves are split into three subsets ($R$, $A$, and $B$) 
according to their query result
(as shown in \cref{fig:phylo}(a)): 
\begin{itemize}
\item
$A$: leaves close to $a$, i.e., for which $closer(a,b,c)=(a,c)$
\item
$B$: leaves close to $b$, i.e., for which $closer(a,b,c)=(b,c)$
\item
$R$: remaining leaves, i.e., for which $closer(a,b,c)=(a,b)$
\end{itemize}
\begin{figure}[t]
\centering
\includegraphics[width=.7\linewidth,page=2]{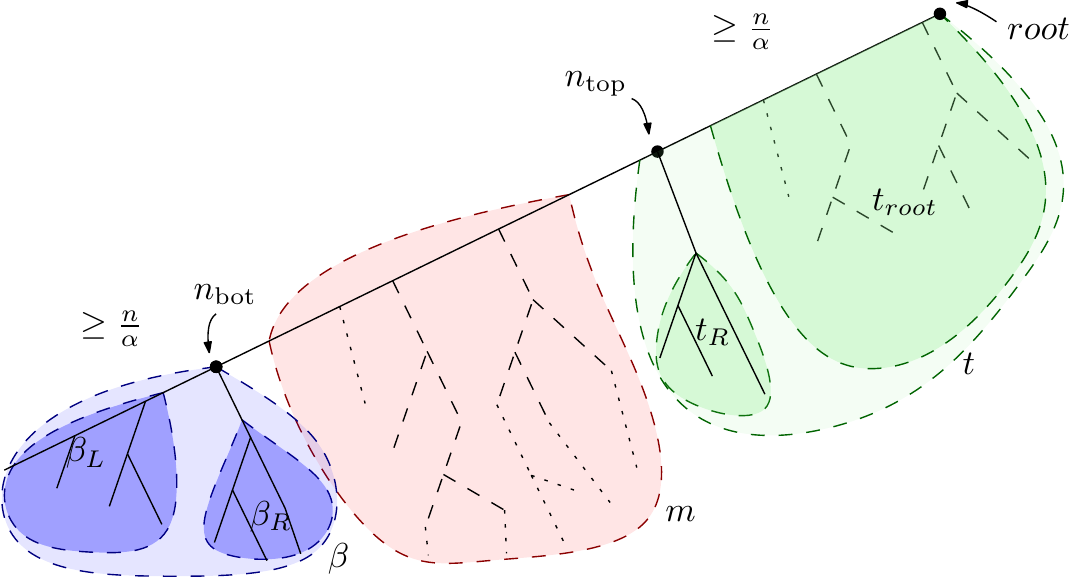}
\caption{\textbf{(a)} The subgroups leaves are split into. \textbf{(b)} The linking step attaching $T_a$, $T_b$ and $T_r$.}
\label{fig:phylo}
\end{figure}
We then recursively construct the trees in parallel:
$T_a$, for $A\cup\{a\}$;
$T_b$, for $B\cup\{b\}$;
and
$T_r$, for $R$.
The remaining challenge, of course, is to merge these trees to
reconstruct the complete tree, $T$.
The subtree of $T$ formed by subset $A \cup B$ is
rooted at an internal node, $v=lca(a,b)$; 
hence, we can create a new node, $v$, label it ``$lca(a,b)$'' and let $T_a$
and $T_b$ be $v$'s children.
If $R=\emptyset$, then we are done. 
Otherwise,
we need to determine the parent of $v$ in~$T$;
that is, we need to link $v$ into $T_r$ using function \textsf{link($v, T_r$)} (see \cref{phylo:link}).
\begin{algorithm}[bt!]
\caption{Reconstruct a binary tree of a set of leaves,~$L$.}
\label{phylo:overview}
 \SetKwFunction{FMain}{\textsf{reconstruct-phylogenetic}}
  \SetKwProg{Fn}{Function}{:}{}
  \SetKwProg{query}{query}{}{}
  \Fn{\FMain{$L$}}{
  \SetAlgoNoLine
\DontPrintSemicolon
\lIf{$|L|\leq 3$}{\KwRet{\rm the tree formed by querying $L$}}
Pick two leaves, $a,b \in L$, uniformly at random\;
\ForPar{each $c\in L$ s.t. $c\not=a,b$}{
Perform query $closer(a,b,c)$\;
}
Split the leaves in $L$ into $R$, $A$, and $B$ based on results\;

\SetKwBlock{Pardo}{parallel do}{}
\Pardo{
$T_a\gets$ \textsf{reconstruct-phylogenetic}$(A\cup \{a\}$)\; 
$T_b\gets$ \textsf{reconstruct-phylogenetic}$(B\cup \{b\})$\;
$T_r\gets$ \textsf{reconstruct-phylogenetic}$(R)$\;
}
Let $v$ be a new node, labeled ``$lca(a,b)$''\;
Set $v$'s left child to $root(T_a)$ and the right to $root(T_b)$\;
\lIf{$R=\emptyset$}{\KwRet{ \rm tree, $T_v$, rooted at $v$}}
\lElse{
\KwRet{\textsf{\rm link}$(v,T_r)$}
}
}
\end{algorithm}

 To identify the parent of $v$, in $T$, let us assume inductively that each 
internal node 
$u \in T_r$ has a label ``$lca(c,d)$'', 
since we have already recursively labeled each internal node in $T$.
Recall that $v$ is labeled with ``$lca(a,b)$''. The crucial observation is to note if
there exists an edge $(u\to w)$ in $T_r$, such that 
$u$ is labeled ``$lca(c,d)$'' and $closer(a,c,d) = (a,z)$ for $z\in\{c,d\}$, and 
w is either leaf~$z$ or an ancestor of $z$ labeled ``$lca(e,f)$'' with $closer(a,e,f)=(e,f)$, 
(See \cref{fig:phylo}(b)), then edge $(u\to w)$ must be where the parent of $v$ belongs in $T$, and if there
is no such edge, the parent of $v$ is the root of $T$ and the sibling of $v$
is the root of $T_r$.
We can determine the edge $(u \to w)$ in a single parallel round by performing the query,
$closer(a,c,d)$, for each each internal node $u \in T$ (where the label of $u$
is ``$lca(c,d)$''). It is also worth noting that if the oracle can identify cases where all three leaves share a single lca, simple modifications to \cref{phylo:overview} would enable it to handle trees of higher degree.

\begin{algorithm}[hbtp]
\caption{Link $v$ into the tree, $T_r$.}
\label{phylo:link}
 \SetKwFunction{FMain}{\textsf{link}}
  \SetKwProg{Fn}{Function}{:}{}
  \SetKwProg{query}{query}{}{}
  \Fn{\FMain{$v,T_r$}}{
  \SetAlgoNoLine
\DontPrintSemicolon
\ForPar{each internal node $u\in T_r$}{
Query $closer(a,c,d)$, 
where $u$'s label is ``$lca(c,d)$''\;
}
Let $(u\to w)$ be the edge in $T_r$ such that: \linebreak
$u$ is labeled ``$lca(c,d)$'' and $closer(a,c,d)$ is $(a,z)$, 
\linebreak
where $z\in\{c,d\}$, and\linebreak
$w$ is leaf~$z$, or an ancestor of $z$ with
label ``$lca(e,f)$'' and $closer(a,e,f)=(e,f)$\;
\uIf{no such edge $(u\to w)$ exists}{
	Let $g$ be a leaf from $root(T_r)$'s $lca$ label\;
	Create a new root node, labeled ``$lca(a,g)$'' with $v$ as left child and $root(T_r)$ as right child\;
	\KwRet{\rm the tree rooted at this new node}
}
Remove $(u\to w)$ from $T_r$\;
Create a new node, $x$, labeled ``$lca(a,z)$'', 
with parent~$u$, left child~$v$, and right child~$w$\;
\KwRet{$T_r$}
}
\end{algorithm}

\newcommand{\node}[1]{$n_{\text{#1}}$}
\newcommand{\splits}{N_{\textsc{splits}}}

\subsection{Analysis}
The correctness of our algorithm follows from the way relative-distance
queries always return a label for the lowest common ancestor for the two
closest leaves among the three input nodes.
Furthermore,
executing the
three recursive calls can be done in parallel, because $A\cup\{a\}$,
$B\cup\{b\}$, and $R$ form a partition of the set of leaves, $L$, and at
every stage we only perform relative-distance queries relevant to the respective partition.

\begin{theorem}
\label{thm:bio}
Given a set, $L$, of $n$ leaves in a proper binary tree, $T$,
such as a biological phylogenetic tree, we can reconstruct $T$ 
using relative-distance queries
with 
a round complexity, $R(n)$, that is $O(\log n)$ and a query complexity,
$Q(n)$, that is $O(n\log n)$, with high probability.
\end{theorem}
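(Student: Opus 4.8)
The plan is to analyze \cref{phylo:overview} as a randomized divide-and-conquer recursion whose recursion tree has logarithmic depth with high probability, and then charge the queries level by level. First I would establish the \emph{round complexity}. Each invocation of \textsf{reconstruct-phylogenetic} uses $O(1)$ parallel rounds of its own (one round for the split queries, plus $O(1)$ rounds inside \textsf{link}), and the three recursive calls proceed in parallel on disjoint leaf sets. Hence $R(n)$ is proportional to the \emph{depth} of the recursion tree. The key probabilistic fact is that when we pick $a,b$ uniformly at random from $L$, the lowest common ancestor $lca(a,b)$ splits the $|L|$ leaves into the three groups $A\cup\{a\}$, $B\cup\{b\}$, $R$, and I claim that with constant probability all three groups have size at most $c|L|$ for some constant $c<1$. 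This is the standard ``random pivot'' argument: think of the Euler-tour / leaf ordering, or simply note that for a proper binary tree there is an internal node whose removal leaves three pieces each of size at most $\tfrac{1}{2}|L|$ (an analogue of \cref{lem:has-separator} for vertex separators in binary trees), and a random pair $(a,b)$ lands on ``both sides'' of such a balanced node with probability $\Omega(1)$. Therefore, along any root-to-leaf path in the recursion tree, the subproblem size shrinks by a constant factor in expectation every $O(1)$ levels, so after $O(\log n)$ levels the size has dropped below $3$ except with probability $1/n^{\Omega(1)}$; a union bound over the at most $n$ leaves of the recursion tree gives that the total depth is $O(\log n)$ w.h.p., and hence $R(n)=O(\log n)$ w.h.p.

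Next I would bound the \emph{query complexity} $Q(n)$. A single call on a set of size $m$ performs $m-2$ split queries plus $O(m)$ queries in \textsf{link} (one per internal node of $T_r$, and $|T_r|=O(m)$), so $O(m)$ queries total, excluding recursion. Since the recursive calls partition the leaves, the total number of queries at any fixed level of the recursion tree is $O(n)$. Combined with the $O(\log n)$ bound on the depth (which holds w.h.p.), this yields $Q(n)=O(n\log n)$ w.h.p. To make the ``w.h.p.'' clean one can condition on the good event that the depth is $O(\log n)$; on the complementary event of probability $1/n^{\Omega(1)}$ the worst-case depth is still only $O(n)$, so $Q(n)=O(n^2)$ always, and this contributes a negligible amount to any expectation-type statement if one is wanted, but the theorem only asserts a high-probability bound, so the conditioning suffices.

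The main obstacle is the probabilistic balance claim: showing that a uniformly random pair of leaves yields a constant-factor three-way split with constant probability, and then boosting this to a high-probability statement about the \emph{entire} recursion tree rather than a single path. The per-step claim requires care because the three groups $A$, $B$, $R$ are not symmetric — $R$ can be all of $L\setminus\{a,b\}$ if $lca(a,b)$ is very deep — so one must argue that a random $(a,b)$ is unlikely to have an lca near the root unless the tree is already balanced there, and unlikely to have an lca near a leaf unless most leaves hang off a small subtree; in either extreme at least one of the other two groups is small. I would formalize this by a potential/weight argument on subtree sizes (each time we recurse, the largest subtree size strictly decreases, and decreases by a constant factor with constant probability), then apply a Chernoff-type bound along each root-to-leaf path of the recursion tree and union-bound over all $O(n)$ such paths. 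The remaining ingredients — correctness of the split, correctness of \textsf{link} (the ``crucial observation'' that the edge $(u\to w)$ identified by the queries $closer(a,c,d)$ is exactly the insertion point of $v$), and the $O(1)$-round, $O(m)$-query accounting per call — are routine given the algorithm description and the discussion preceding it.
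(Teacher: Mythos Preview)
Your proposal is correct and follows essentially the same approach as the paper: prove a constant-probability ``good split'' lemma, model the number of levels until termination along each root-to-leaf path of the recursion tree as a sum of geometric random variables, apply a Chernoff bound for such sums, and union-bound over the $O(n)$ paths; the $O(n)$-per-level query accounting is identical. The one place where your sketch and the paper diverge is the formalization of the balance claim: your centroid argument as stated is not quite enough (the two child subtrees of a centroid can have sizes $1$ and $\approx n/2$, so ``landing on both sides'' need not have $\Omega(1)$ probability), and the paper instead uses a left-heavy drawing and identifies two spine nodes $n_{\text{top}},n_{\text{bot}}$ whose induced three-way partition $(\beta,m,t)$ has $|\beta|,|m|\in\Omega(n)$, so that one leaf in $\beta$ and one in $m$ forces a good split---this is exactly the ``potential/weight argument'' you say you would supply.
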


\begin{proof}
Because the recursive calls we perform in each call to
the reconstruct algorithm are done in parallel on a partition of
the leaf nodes in $L$, we perform $\Theta(n)$ work per round. 
Thus, showing that the
number of rounds, $R(n)$, is $O(\log n)$ w.h.p.~also implies that
$Q(n)$ is $O(n\log n)$ w.h.p.
To prove this, we show that each round in \cref{phylo:overview} 
has a constant probability of decreasing the problem size by at 
least a constant factor for each of its recursive calls.
For analysis purposes, we consider the left-heavy representation
of each tree, in which the tree rooted at the left child of any
node is always at least as big as the tree rooted at its right
child. 
(See \cref{fig:split}.)
Using this view, we can characterize when a partition
determines a ``good split'' and provide
bounds on the sizes of the partitions, as follows.

\begin{figure}[t]
\centering
\includegraphics[width=.6\linewidth,page=1]{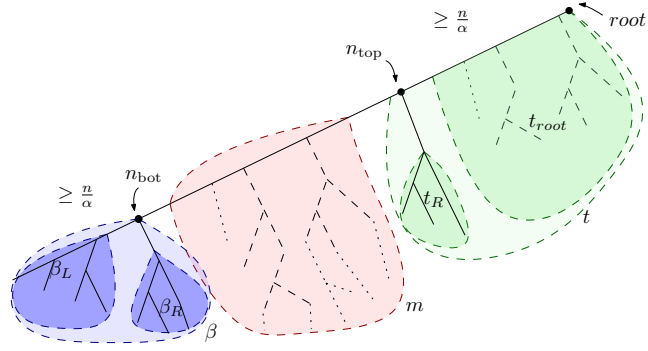}
\caption{A left-heavy tree drawing displaying node \node{top}, node \node{bot} and the relevant partitions}
\label{fig:split}
\end{figure}

\begin{lemma}\label{lm:phylo}
With probability of at least $\frac{\alpha-5}{2 \alpha ^2}$, a round of 
\cref{phylo:overview} 
decreases the problem size of any recursive
call by at least a factor of
$\frac{\alpha-1}{\alpha}$, for a constant $ \alpha > 5$,
thus experiencing a \emph{good split}.
\end{lemma}

\begin{proof}
Define the \emph{spine} to be all nodes on the path from the root to its left-most leaf in a left-heavy drawing.
 Let \node{bot} be the bottom-most node on the spine
that has at least $n/ \alpha$ of the nodes in its sub-tree. Conversely,
let \node{top} be the parent of the top-most spine node that has
at most $(1-1/ \alpha)\cdot n$ descendants.  (See \cref{fig:split}.)
Consider the three resulting trees obtained from separating at the
incoming edge to \node{bot} and the outgoing edge from \node{top}
to its left child. As shown in \cref{fig:split}, let $t$ be the
resulting tree retaining the root, $\beta$ the tree rooted at
\node{bot} and $m$ the tree between \node{top} and \node{bot}.
Within $\beta$ let $\beta_L$ and $\beta_R$ be the trees rooted at
the left and right child of \node{bot}.  Similarly, for $t$, let
$t_R$ be the tree rooted at the right child of \node{top}.
Finally, let $t_{root}$ be the remaining tree when cut at \node{top}.

Consider the size of tree $\beta$, $|\beta|$, since this is the first tree rooted in the spine with over $n/\alpha$ nodes, then $\beta_L$ must have had strictly under $n/\alpha$ nodes. Since the trees are in left-heavy order, $\beta_R$ can have at most as many nodes as $\beta_L$ so $\frac{n}{\alpha} \le |\beta|<\frac{2n}{\alpha}$.
Furthermore, we know that 
$|\beta|+|m|+|t_R|+|t_{root}|+1=n$. 
Due to the left-heavy order, $|t_R|\leq|\beta|+|m|$. 
By definition of \node{top}, it's necessary that $|t_{root}|<\frac{n}{\alpha}$, 
thus $2(|\beta|+|m|)+1>n-\frac{n}{\alpha}$ and $|\beta|+|m|> \left( \frac {\alpha-1}{2\alpha}\right)n-\frac{1}{2}$.
 Using the previous inequality and $|t|=n-|\beta|-|m|$, we find $|t|<\left(\frac{\alpha+1}{2\alpha}\right)n+\frac{1}{2}$.
Also, $|m|=n-|t|-|\beta|$, so 
$|m| > n-\frac{5n+\alpha n}{2\alpha}$. 

\begin{equation}
\label{eq:left_heavy}
    \frac{n}{\alpha} \le |\beta| <  \frac{2n}{\alpha}, \qquad  \frac{n}{\alpha} \le |t| < \left(\frac{\alpha+1}{2\alpha}\right)n+\frac{1}{2},  \qquad \left(\frac{\alpha-5}{2\alpha}\right)n < |m|\le \left(\frac{\alpha -2 }{\alpha}\right)n
\end{equation}
Picking a leaf from $\beta$ and another from $m$ guarantees that
$\beta \subseteq \left( A\cup\{a\} \right)$ and $t \subseteq R$. Thus, using \cref{eq:left_heavy}, each of the three sub-problem sizes, $|A\cup\{a\}|$, $|B\cup\{b\}|$, and $|R|$,  will be at most $\left(\frac{\alpha-1}{\alpha}\right)n$, when $\alpha > 5$. 
$Pr[\text{good split}]\geq Pr[\text{leaf in }\beta]\cdot Pr[\text{leaf in }m]$.
Asymptotically, $Pr[\text{leaf in }\beta]\approx Pr[\text{node in }\beta]$,
thus $Pr[\text{good split}]>\frac{\alpha-5}{2\alpha^2}$, which established
the lemma.
\end{proof}

Returning to the proof of \cref{thm:bio},
let $p=\frac{\alpha-5}{2\alpha ^2}$. From \cref{lm:phylo}, we expect it
will take $1/p$ rounds to obtain a good split. Every good split
will reduce the problem-size by at least a constant factor, 
$\frac{ \alpha -1}{\alpha }$. 
Thus, we are guaranteed to have just a single node left
after we get $\splits=\log_{\frac{\alpha}{\alpha-1}}(n)$
good splits.
Consider the geometric random
variable, $X_i$, describing the number of rounds required to obtain
the $i$-th good split, then $X=X_1+\ldots + X_{\splits}$ describes
the total number of rounds required by the algorithm. By linearity of 
expectation,
$E[X]=\frac{2\alpha^2}{(\alpha-5)}\cdot\splits=\frac{2\alpha^2}{(\alpha-5)}\cdot
\log_{\frac{\alpha}{\alpha-1}}(n)$.
Therefore, since $\alpha>5$ is a constant,
this already implies an expected
$O(\log n)$ rounds for \cref{phylo:overview}.
Moreover,
by a Chernoff bound for the sum of independent geometric random
variables (see~\cite{gt-adfai-02,DBLP:books/daglib/0012859}), 
$Pr\left[X>C\cdot E[X]\right]\leq e^{\frac{{-}(C-1)\cdot
p\cdot \splits}{5}}$
for any constant $C\geq3$ and constant
$\alpha >5$.
Thus, the probability that we take over $C\cdot E[X]$ rounds is
$O(1/n^{C-1})$. Therefore, by a union bound across the $n$
leaves, our algorithm completes in $O(\log n)$ rounds w.h.p. 
\end{proof}

\begin{corollary}\label{cor:bio_optimal}
\cref{phylo:overview} is optimal
when asking $\theta({n})$ queries per round.
\end{corollary}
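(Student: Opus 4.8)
The plan is to establish matching lower bounds $Q(n)=\Omega(n\log n)$ and $R(n)=\Omega(\log n)$ through an information-theoretic (decision-tree) argument, and then observe that these coincide with the upper bounds of \cref{thm:bio}. First I would bound the number of possible reconstruction targets: on a fixed set of $n$ labeled leaves, the number of distinct proper (rooted) binary trees is $(2n-3)!!=1\cdot 3\cdot 5\cdots(2n-3)$, so $\log\big((2n-3)!!\big)=\sum_{k=1}^{n-1}\log(2k-1)=\Theta(n\log n)$. Each $closer(u,v,w)$ query returns one of exactly three outcomes, so any correct algorithm, for each fixing of its random choices, behaves as a ternary decision tree whose leaves are labeled by reconstructed trees; to be correct on every input it must have at least $(2n-3)!!$ leaves, hence depth at least $\log_3\big((2n-3)!!\big)=\Omega(n\log n)$. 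This gives $Q(n)=\Omega(n\log n)$ already in the sequential setting, and the same asymptotic bound for randomized algorithms follows by taking the hard distribution to be uniform over all $(2n-3)!!$ trees and invoking Yao's principle: any deterministic algorithm facing this distribution must, in expectation, ask $\Omega(n\log n)$ queries just to acquire $\log\big((2n-3)!!\big)$ bits of information.

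Next I would fold in the per-round budget. If the algorithm issues $O(n)$ queries in each round, then after $R(n)$ rounds it has asked only $O(n\,R(n))$ queries in total; combining this with $Q(n)=\Omega(n\log n)$ forces $R(n)=\Omega(\log n)$. By \cref{thm:bio}, \cref{phylo:overview} attains $R(n)=O(\log n)$ and $Q(n)=O(n\log n)$ while issuing $\Theta(n)$ queries per round, so both its round complexity and its query complexity are optimal up to constant factors in this regime, which is exactly the claim.

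The step I expect to be the main obstacle is making the counting and the decision-tree bound airtight rather than vacuous: one must verify that each of the $(2n-3)!!$ leaf-labeled proper binary trees is a legitimate target, i.e., is consistent with the answers of some responder, and that relative-distance queries genuinely have only three outcomes, with no ``all three equally close'' response available to the responder --- which is precisely why the restriction to proper binary trees in \cref{def:relative_distance} matters. A related subtlety is making the randomized lower bound fully rigorous via Yao's principle; everything else is a routine calculation.
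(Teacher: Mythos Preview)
Your argument is correct and arrives at the same conclusion as the paper, but the route differs in emphasis. The paper's justification of \cref{cor:bio_optimal} is essentially a two-line remark: it \emph{cites} the $\Omega(n\log n)$ lower bound on $Q(n)$ from Kannan \textit{et al.}~\cite{DBLP:journals/jal/KannanLW96} rather than proving it, and then observes that $\Theta(n)$ queries per round forces $R(n)=\Omega(\log n)$. You instead re-derive the $\Omega(n\log n)$ bound from scratch via the information-theoretic count $(2n-3)!!$ of leaf-labeled proper binary trees and a ternary decision-tree argument, and you also address the randomized case explicitly through Yao's principle. What you gain is self-containment and an explicit treatment of randomization; what the paper gains is brevity, since the counting argument you spell out is exactly the one underlying the cited result. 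Your concern about the ``all three equally close'' outcome is well placed and is indeed handled by the restriction to proper binary trees in \cref{def:relative_distance}.
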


The query complexity of \cref{phylo:overview} matches an $\Omega(n\log n)$ lower bound for $Q(n)$, due to 
Kannan {\it et al.}~\cite{DBLP:journals/jal/KannanLW96}. Besides, we need $\Omega(\log {n})$ rounds if we have $\theta({n})$ processors; hence, the round complexity of \cref{phylo:overview} is also optimal.

\let\oldnl\nl
\newcommand{\nonl}{\renewcommand{\nl}{\let\nl\oldnl}}

\section{Reconstructing Phylogenetic Trees from Path Queries}
\label{sec:path}
Let $T=(V,E,r)$ be a rooted (biological or digital) 
phylogenetic tree with fixed degree, $d$.
In this section, we show how a querier can reconstruct $T$ by issuing
$Q(n) \in O(n\log n)$ path queries in $R(n) \in O(\log n )$ 
rounds, w.h.p., where $n=|V|$. We provide a lower bound to prove that our algorithm is optimal in terms of query complexity and round complexity.
At the outset, the only thing we assume the querier knows is $n$ and $V$, that is,
the vertex set for $T$, and that the names of the nodes in $V$ are unique, i.e.,
we may assume, w.l.o.g., that $V=\{1,2,\ldots,n\}$.
The querier doesn't know $E$ or $r$---learning these is his goal.

\subsection{Algorithms}
We start by learning $r$, which we show can be done
via any maximum-finding algorithm in Valiant's parallel 
model~\cite{DBLP:journals/siamcomp/Valiant75}, which 
  only counts parallel steps involving comparisons.
The challenge, of course, is that the ancestor relationship
in $T$ is, in general, not a total order,
as required by a maximum-finding algorithm.
This does not actually pose a problem, however.

\begin{lemma}\label{lem:max-finding}
Let $A$ be a parallel maximum-finding algorithm 
in Valiant's model, with
$O(f(n))$ span and $O(g(n))$ work.
We can use $A$ to find the root, $r$ in a rooted tree $T=(V,E,r)$,
using $R(n)\in O(f(n))$ rounds and $Q(n)\in O(g(n)+n)$ total queries.
\end{lemma}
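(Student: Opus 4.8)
The plan is to show that the ancestor relation on $V$, although only a partial order, behaves exactly like a total order \emph{along the root-to-leaf path that any comparison-based maximum-finding algorithm actually explores}, so that running $A$ on it returns the root. First I would implement the comparison primitive: given two vertices $u, v \in V$, issue the two path queries $path(u,v)$ and $path(v,u)$ in parallel and declare $u \succ v$ if $path(u,v) = 1$ (i.e.\ $u$ is a strict ancestor of $v$), $v \succ u$ if $path(v,u) = 1$, and break the tie arbitrarily (say, by vertex name) if both answers are $0$, i.e.\ if $u$ and $v$ are incomparable. This doubles the query count of each comparison, so an algorithm $A$ performing $O(g(n))$ comparisons uses $O(g(n))$ path queries, and its $O(f(n))$ parallel comparison steps become $O(f(n))$ rounds; the additive $O(n)$ in $Q(n)$ absorbs any one-time cost of reading in $V$ (and, if needed, confirming the winner is the root — see below).

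The key structural observation is that $\succ$ restricted to any chain of $T$ (any set of pairwise-comparable vertices) is a total order with the root as its maximum, and more importantly that the ``winner'' maintained by a maximum-finding algorithm is monotone: whenever the current champion $c$ is compared against a challenger and the challenger does not strictly beat it, $c$ survives. I would argue by induction on the steps of $A$ that the champion vertex is never dethroned by an incomparable vertex — only a strict ancestor can replace it — hence the champion's depth in $T$ is non-increasing over the run of $A$. Since $A$ is a correct maximum-finding procedure in Valiant's model, every vertex is (directly or transitively) compared against the eventual output; in particular the true root $r$ enters a comparison with the final champion $c$ at some point. If $c \neq r$ then $c$ is a proper descendant of $r$ (it cannot be incomparable to $r$, since $r$ is an ancestor of everything), so that comparison returns $r \succ c$ and $c$ is replaced — contradiction. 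Therefore $c = r$.

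The main obstacle is precisely handling incomparable pairs: a generic maximum-finding algorithm is only guaranteed correct when fed a genuine total order, so one must check that the tie-breaking rule does not cause it to ``lose track'' of $r$. The resolution is the monotonicity argument above — ties are always resolved in favor of the incumbent's lineage in the sense that an incomparable challenger never displaces the incumbent — together with the fact that $r$ is comparable to \emph{every} other vertex, so $r$ can never be eliminated by a tie and will defeat any non-root that reaches it. A minor subtlety is that some maximum-finding algorithms in Valiant's model are stated for the element that is $\ge$ all others under a total order; since our $\succ$ is not total, I would either (i) invoke a tournament-style algorithm $A$ whose correctness proof only uses transitivity and the existence of a unique top element comparable to all others (both of which hold here), or (ii) append one extra round that issues $path(c, v)$ for all $v \in V$ to verify the champion $c$ is an ancestor of everything, which by the argument above it must be; this verification costs $n-1$ additional queries, still within the claimed $O(g(n)+n)$ bound, and guarantees correctness regardless of which maximum-finding algorithm is plugged in.
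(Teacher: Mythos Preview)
Your approach differs from the paper's and has a genuine gap: the tie-broken relation $\succ$ you define on all of $V$ need not be transitive, so feeding it to an arbitrary max-finding algorithm $A$ can produce garbage. Concretely, let $r$ have children $a,b$ and let $b$ have child $c$; if the name-based tie-break happens to give $c \succ a$ and $a \succ b$, then combined with the ancestor fact $b \succ c$ you get the cycle $a \succ b \succ c \succ a$. A max-finder that compares all pairs and, on detecting an inconsistent outcome pattern, falls back to outputting the smallest-labelled vertex is perfectly correct in Valiant's model (its fall-back branch is unreachable under any total order) yet need not output $r$ here. So your assertion that ``transitivity \ldots\ hold[s] here'' is false for $\succ$; your champion-depth induction presumes a single-elimination structure that a general $A$ need not have; and fix~(ii) only detects that the output is wrong, it does not recover $r$ within the stated bounds. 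Fix~(i) does work, but it proves a weaker statement than the lemma, which quantifies over all $A$.

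The paper avoids all of this with a different reduction: pick an arbitrary vertex $v$, spend one parallel round and $n-1$ queries asking $path(u,v)$ for every $u$ to obtain the ancestor set $S$ of $v$, and observe that $S$ is a root-to-$v$ chain on which the ancestor relation \emph{is} a total order. Now $A$ can be invoked as a black box on $S$ and is guaranteed to return its maximum, which is $r$. The additive $O(n)$ in the query bound is exactly this preprocessing step.
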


\begin{proof}
We pick an arbitrary vertex $v\in T$. In the first round, we perform
queries $path(u,v)$ in parallel for every other vertex $u\in V$ to
find $S$, the ancestor set for $v$. If $S = \emptyset$, then $v$ is the
root. Otherwise, we know all the vertices in
a path from root to the parent of $v$, albeit unsorted. 
Still, note that for $S$ the ancestor relation is a total order;
hence, we can simulate $A$ with path queries to resolve the
comparisons made by $A$. 
We have just a single round and $O(n)$
queries more than what it takes for $A$ to find the maximum. 
Thus, we can
find the root in $O(f(n))$ rounds and $O(g(n)+n)$ queries. 
\end{proof}

Thus, by well-known maximum-finding algorithms, e.g.,
see~\cite{DBLP:conf/stoc/ColeV86, DBLP:conf/conpar/ShiloachV81, DBLP:journals/siamcomp/Valiant75}:

\begin{corollary}
We can find $r$ of a rooted tree $T=(V,E,r)$ deterministically in  
$O(\log \log n)$ rounds and $O(n)$ queries.
\end{corollary}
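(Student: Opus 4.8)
The plan is to obtain the statement as an immediate instantiation of \cref{lem:max-finding}, which already does all of the conceptual work: it reduces finding the root of $T$ — up to one extra round and an additive $O(n)$ queries — to running a parallel maximum-finding algorithm in Valiant's comparison model, where ``span'' counts parallel comparison rounds and ``work'' counts total comparisons. So the only thing left is to pick a concrete algorithm $A$ with the right span/work trade-off and read off the bounds; in particular, to hit $Q(n)\in O(n)$ rather than $O(n\log\log n)$, the chosen $A$ must be \emph{work-optimal}.

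First I would recall the classical fact that the maximum of $n$ comparable elements can be computed deterministically in $O(\log\log n)$ parallel comparison rounds via Valiant's doubly-logarithmic tournament, and then note that this can be made work-optimal by a standard preprocessing step: partition the $n$ elements into $\lceil n/\log\log n\rceil$ blocks of size $O(\log\log n)$, compute each block maximum sequentially but in parallel across blocks (this costs $O(n)$ comparisons spread over $O(\log\log n)$ rounds), and then run the doubly-logarithmic algorithm on the $O(n/\log\log n)$ surviving candidates, which costs $O\big((n/\log\log n)\cdot\log\log n\big)=O(n)$ comparisons in $O\big(\log\log(n/\log\log n)\big)=O(\log\log n)$ rounds. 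Thus $A$ has span $O(f(n))$ with $f(n)=\log\log n$ and work $O(g(n))$ with $g(n)=n$; the algorithms of~\cite{DBLP:conf/stoc/ColeV86, DBLP:conf/conpar/ShiloachV81, DBLP:journals/siamcomp/Valiant75} provide exactly such a deterministic routine.

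Finally I would substitute these bounds into \cref{lem:max-finding}: the root is found in $R(n)\in O(f(n))=O(\log\log n)$ rounds and $Q(n)\in O(g(n)+n)=O(n)$ path queries, and the whole procedure is deterministic since both the reduction in \cref{lem:max-finding} and the maximum-finding algorithm are deterministic. I do not expect a genuine obstacle here — the one point that needs care is ensuring the maximum-finding subroutine is work-optimal, since a non-work-optimal $O(\log\log n)$-round algorithm would only yield $Q(n)\in O(n\log\log n)$; the block-reduction preprocessing above is precisely what secures the $O(n)$ query bound, and everything else is a direct plug-in to the already-established lemma.
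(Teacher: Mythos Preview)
Your proposal is correct and follows the same approach as the paper: instantiate \cref{lem:max-finding} with a known deterministic, work-optimal, $O(\log\log n)$-round maximum-finding algorithm from~\cite{DBLP:conf/stoc/ColeV86, DBLP:conf/conpar/ShiloachV81, DBLP:journals/siamcomp/Valiant75}. You actually give more detail than the paper does---spelling out the block-reduction preprocessing that achieves work optimality---whereas the paper simply cites the references and reads off the bounds.
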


Determining the rest of the structure of $T$ is more challenging, however.
At a high level, our approach to solving this challenge
is to use a separator-based divide-and-conquer
strategy, that is, find a ``near'' edge-separator
in $T$, 
divide $T$ using this edge, and recurse on the two remaining subtrees in parallel.
The difficulty, of course, is that the querier has no knowledge of the edges of $T$;
hence, the very first step, finding a ``near'' edge-separator, is a bottleneck
computation.
Fortunately, as we show in \cref{lem:exist-seperator},
if $v$ is a randomly-chosen vertex, then,
with probability depending on $d$,
the path from root $r$ to $v$
includes an edge-separator.

\begin{lemma} \label{lem:exist-seperator}
Let $T=(V,E,r)$ be a rooted tree of degree $d$ and let $v$ be a vertex chosen uniformly at random from $V$. Then, with probability at least $\frac{1}{d}$, an even-edge-separator is one of the edges on the path from $r$ to $v$.  
\end{lemma}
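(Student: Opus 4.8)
The plan is to fix a single even-edge-separator and compute the probability that a uniformly random vertex lands in the subtree it cuts off. By \cref{lem:has-separator}, $T$ possesses an even-edge-separator $e=(x,y)$, and by \cref{def:even_separator} removing $e$ partitions $T$ into $T'=(V',E',y)$ and $T''=(V'',E'',r)$ with $\frac{|V|}{d} \le |V'| \le \frac{|V|(d-1)}{d}$. Note that it suffices to exhibit one edge on the $r$-to-$v$ path that is an even-edge-separator, so there is no need to reason about all such edges simultaneously.

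First I would observe that the unique directed path from $r$ to $v$ passes through $e=(x,y)$ if and only if $v$ is reachable from $y$, i.e., $v \in V'$; this includes the boundary case $v=y$, where $e$ is simply the last edge of the path. Since $v$ is chosen uniformly at random from $V$, this gives
\[
\Pr[\,e \text{ lies on the } r\text{-to-}v \text{ path}\,] \;=\; \frac{|V'|}{|V|} \;\ge\; \frac{1}{d}.
\]
Because $e$ is itself an even-edge-separator, the event ``some edge of the $r$-to-$v$ path is an even-edge-separator'' contains the event above, and the claimed probability bound follows immediately.

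The argument requires no concentration or union bound — it is a one-line counting fact once the correct event is identified — so there is essentially no hard step. The only point to handle with care is the endpoint case $v=y$: it is precisely because of this case that we use the subtree $T'$ that retains its own root $y$ and invoke the lower bound $|V'| \ge |V|/d$ on $|V'|$, rather than counting only the strict descendants of $y$.
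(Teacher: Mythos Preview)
Your proof is correct and follows essentially the same approach as the paper: fix one even-edge-separator $e=(x,y)$ via \cref{lem:has-separator}, observe that the $r$-to-$v$ path contains $e$ exactly when $v\in V'$, and use $|V'|\ge |V|/d$ from \cref{def:even_separator} to get the $1/d$ bound. The extra remarks about the boundary case $v=y$ and about not needing a union bound are fine but not strictly necessary.
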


\begin{proof}
By \cref{lem:has-separator}, $T$ has an even-edge-separator. Let $e=(x,y)$ be an even-edge-separator for $T=(V,E,r)$ and let ${T^{\prime}=(V^{\prime},E^{\prime},y)}$ be the subtree rooted at $y$ when we remove $e$. 
Then, every path from $r$ to each $v \in V^{\prime}$ must contain $e$. 
By \cref{def:even_separator}, $T^{\prime}$ has at least $|V|/d$ vertices. 
Therefore, if we choose $v$ uniformly at random from $V$, then with probability $\frac{|V^{\prime}|}{|V|} \ge \frac{1}{d}$, 
the path from $r$ to $v$ contains $e$. 
\end{proof}

\begin{definition}{(splitting-edge)} \label{def: splitting_edge}
In a degree-$d$ rooted tree, an edge $(parent(s),s)$ is a splitting-edge if $ \frac{|V|}{d+2} \leq   \textit{}{D}(s) \leq \frac{|V|  (d+1)}{d+2}$, where $\textit{D}(s)$ is the number of descendants of $s$.
\end{definition}

Note that a degree-$d$ rooted tree $T$ always has a splitting-edge,
 as every even-edge-separator is also a splitting-edge and 
 by \cref{lem:has-separator}, it always has an even-edge-separator---a fact we use in our tree-reconstruction algorithm, which we describe next.
This recursive algorithm (given in pseudo-code 
in \cref{alg:rooted-reconstruct}),
assumes the existence of a randomized 
method, \textsf{find-splitting-edge}, which returns
a splitting-edge in $T$, with probability $\Omega(1/d)$, and otherwise returns 
$\mathit{Null}$.
Our reconstruction algorithm is therefore a randomized recursive algorithm that 
takes as input a set of vertices, $V$,  
with a (known) root vertex $r \in V$, and
returns the edge set, $E$, for $V$.
At a high level, our algorithm is to repeatedly call
the method, \textsf{find-splitting-edge}, until it returns a splitting-edge,
at which point we divide the set of vertices using this edge and recurse on
the two resulting subtrees.

\begin{algorithm}[hbt!] 
\caption{Reconstruct a rooted tree with path queries}
\label{alg:rooted-reconstruct}

\SetKwFunction{FMain}{\textsf{reconstruct-rooted-tree}}
\SetKwProg{Fn}{Function}{:}{}
\SetKwFor{ParQuery}{for each}{query in parallel}{}
\Fn{\FMain{$V,r$}}{
\SetAlgoNoLine
\DontPrintSemicolon
  $E \gets \emptyset$

  \If(\tcp*[h]{$g$ is a chosen constant}){$|V|\le g$}{
    \KwRet edges found by a quadratic brute-force algorithm    
  }
    \While{true}{
      Pick a vertex $v \in V$ uniformly at random\\
    	\ForPar{$z\in V$}{Perform query $path(z,v)$}
       Let $Y$ be the vertex set of the path from $r$ to $v$\\
      splitting-edge $\gets$ \textsf{find-splitting-edge}$(v, Y, V)$\\
      \If{splitting-edge $\neq$ $\mathit{Null}$}{
        $(u,w) \gets$ splitting-edge\\
        $E \gets E \cup \{(u,w)\}$\\
	\ForPar{$z\in V$}{Perform query $path(w,z)$}
        split $V$ into $V_1,V_2$ at $(u,w)$ using query results\;
        \SetKwBlock{Pardo}{parallel do}{}
        \Pardo{
            {$E \gets  E \cup 
            \textsf{reconstruct-rooted-tree}(V_1,w)$}

            {$E \gets E \cup 
            \textsf{reconstruct-rooted-tree}(V_2,r)$}
        }
        \KwRet $E$
      }
    }
}
\end{algorithm}

In more detail,
during each iteration of a repeating \textbf{while} loop, we choose a vertex 
$v \in V$ uniformly at random. 
Then, we find the vertices on the path from $r$ to $v$ and store them in 
a set, $Y$, using the fact that
a vertex, $z$, is on the path from $r$ to $v$ if and only if $path(z,v)=1$.  
Then, we attempt to find a splitting-edge using the function \textsf{find-splitting-edge} 
(shown in pseudo-code in Algorithm~\ref{alg:splitting-edge-finding}). 
If \textsf{find-splitting-edge} is unsuccessful, we give up on
vertex $v$, and restart the \textbf{while} loop with a new choice for $v$.
 Otherwise, \textsf{find-splitting-edge} succeeded and we cut the tree at the returned splitting-edge, $(u,w)$.
All vertices, $z \in V$, where $path(w,z)=1$ belong to the subtree rooted at $w$, thus belonging to $V_1$, whereas the remaining vertices belong to $V_2$ and the partition containing both $u$ and rooted at $r$.
Thus, after cutting the tree we recursively \textsf{reconstruct-rooted-tree} on $V_1$ and $V_2$.

\begin{algorithm}[hbt]
\caption{Finding a splitting-edge from vertex set, $Y$, on the path from vertex $v$ to the root $r$}
\label{alg:splitting-edge-finding}

  \SetKwFunction{FMain}{\textsf{find-splitting-edge}}
  \SetKwProg{Fn}{Function}{:}{}
  \Fn{\FMain{$v, Y, V$}}{
  \SetAlgoNoLine
\DontPrintSemicolon

    splitting-edge $\gets$ $\mathit{Null}$\\
    $m = C_1 \sqrt{|V|}$, $K = C_2 \log{|V|}$ \label{split_3} 
    
  \nonl  \SetKwProg{myalgo}{Phase 1:}{}{}
\myalgo{}{
     \If{$|Y| > |V|/K$}{
        $S \gets$ subset of $m$ random elements from $Y$
        
        $S \gets S \cup{\{v,r\}}$ \label{alg:split_6}

        \ForPar{each $s \in S$}{
            $X_s \gets$ subset of $K$ random elements from $V$
            
           Perform queries to find $count(s,X_s)$
        }

        \lIf{$\forall s \in S : count(s,X_s) < \frac{K}{d+1}$ }{\KwRet $\mathit{Null}$}    \label{split_10} 
          
          \lIf{$\forall s \in S : count(s,X_s) > \frac{K d}{d+1}$}{\KwRet $\mathit{Null}$}   \label{split_11}
          
            \If{$ \exists s \in S : \frac{K}{d+1} \le count(s,X_s) \le \frac{K d}{d+1}$}{
            \nonl	\KwRet \rm {\textsf{verify-splitting-edge}$(s,V)$}\label{split_12}
            }
            
         \ForPar{each $\{a,b\} \in S$}{ \label{alg:split_13}
            perform query $path(a,b)$
        }
        Find $w, z$ such that they are two consecutive nodes in the sorted order of $S$ such that  $count(w,X_w) > \frac{K d }{d+1}$ and $count(z,X_z) < \frac{K}{d+1}$ \label{split_15}\; 
        $Y \gets $  nodes from $Y$ in the path from $w$ to $z$ \label{split_16}
    }

       \SetKwProg{myalg}{Phase 2:}{}{}
\nonl \myalg{}{ 
        \lIf{$|Y| > |V|/ K $}{\KwRet{$\mathit{Null}$}} \label{split_17}
        \ForPar{each  $ s \in Y$}{
        $X_s \gets$ subset of $K$ random elements from $V$\\
        Perform queries to find $count(s,X_s)$
    }
    \If{$\exists s \in Y$\rm{ s.t. }$ \frac{K}{d+1} \leq count(s,X_s) \leq \frac{K d}{d+1}$}{
        \nonl    \KwRet \rm \textsf{verify-splitting-edge}$(s,V)$} 
            \label{split_21}
    }   
    }  
    \KwRet $\mathit{Null}$ \label{split_22}
    }
        
  \end{algorithm}

The main idea for our 
efficient tree reconstruction algorithm lies
in our \textsf{find-splitting-edge} method (see Algorithm~\ref{alg:splitting-edge-finding}),
which we describe next.
This method
takes as input the vertex $v$, the vertex set $Y$, (comprising 
the vertices on the path from $r$ to $v$), and the vertex set $V$. 
As we show, with probability depending on $d$, the 
output of this method is a splitting-edge; otherwise, the output is $\mathit{Null}$. 
Our algorithm performs a type of ``noisy''
search in $Y$ to either locate a likely splitting-edge or return $\mathit{Null}$ as 
an indication of failure.

Our \textsf{find-splitting-edge} algorithm consists of two phases.
We enter \textbf{Phase 1} if the size of path $Y$ is too big, i.e.,
$|Y| > |V|/K = \frac{|Y|}{C_2 \log {|V|} }$, where $C_2$ is a
predetermined constant and $K = C_2 \log |V|$. The purpose of this phase is either to pass
a shorter path including an even-edge-separator to the second phase
or to find a splitting-edge in this iteration.  The search on the
set $Y$ is noisy, because it involves random sampling. In particular,
we take a random sample $S$ of size $m = C_1 \sqrt {|V|}$ from path $Y$ (where $C_1$ is a predetermined constant). We include $r$ and $v$, the 
two endpoints of the path $Y$, to $S$. Then, we estimate the number of descendants of $s$,
$\textit{D}(s)$, for each $s \in S$.
To estimate this number for each $s \in S$, we take a random sample $X_s$ of $K$ elements from $V$ and we perform
queries to find $count(s,X_s)$.
Here, we use $m\cdot K \in O(\sqrt {|V|}  \log {|V|})$ queries in a
single round. Then, if all the estimates were less than $K/(d+1)$,
we return $\mathit{Null}$ as an indication of failure (we guess
that all the nodes on the path $Y$ have too few descendants to be
a separator). Similarly, if all the estimates were greater than
$\frac{K d}{d+1}$, we return $\mathit{Null}$ (we guess that all the
nodes on the path $Y$ have too many descendants to be a separator).
If there exists a node $s$ such that $\frac{K}{d+1} \le count(s,X_s)
\le \frac{K d}{d+1}$, we check if $s$ is a splitting-edge by
counting its descendants using a function, $\textsf{verify-splitting-edge}$. 
This function takes vertex $s$ and the full vertex set $V$ to return edge $($\textsf{find-parent}$(s,V),s)$ if $ \frac{|V|}{d+2} \leq count(s,V) \leq |V| \cdot \frac{d+1}{d+2}$ and return $\mathit{Null}$
otherwise.

If neither of these three cases happens, we perform
queries to sort elements of $S$ using a trivial quadratic work
parallel sort which takes $O(m^2) \in O(|V|)$ queries in a single
round. We know that two consecutive nodes $w$ and $z$ exist on the sorted
order of $S$, where $count(w,X_w)> \frac{K d}{d+1}$
and $count(z,X_z) < \frac{K}{d+1}$. We find all the nodes on $Y$
starting at $w$ and ending at $z$, and use this as our new path $Y$.

In \textbf{Phase 2}, we expect
a path of size under $|Y|/K$, we will later prove this is true with high probability.
 Otherwise, we just return
$\mathit{Null}$. In this phase, we estimate the number of
descendants much like we did in the previous phase, except the only
difference is that we estimate the number of descendants for all
the nodes on our new path $Y$. If there exists a node $s \in Y$ such that
$\frac{K}{d+1} \le count(s,X_s) \le \frac{K d}{d+1}$, we verify if
it is a splitting-edge, as described earlier.

Finally, let us describe how we find the parent of a node $s$ in $V$.
We first find, $Y$, the set of ancestors of $v$ in $V$ in parallel
using $|V|$ queries.  Let $x\succ y$ describe the total order of nodes in path $Y$,
where for any $ x,y \in Y : x \succ y$ if and only if $path(x,y)=1$.
The parent of $s$ is the lowest vertex on the path. Then, the key idea is
that if $|Y| \in O(\sqrt{|V|})$, we can sort them
using $O(|V|)$ queries. If the path is greater than this amount, we instead use $S$, a sample of size $O(\sqrt{|V|})$ from the path.
Next, we sort the sample to obtain $x_1 < \ldots < x_m$ for $S$ and then
 find all of the nodes in $Y$ which are less than the smallest sample
$x_1$. Finally, we replace $Y$ with these descendants of $x_1$ and repeat the whole
procedure again. We later prove that with high probability after two iterations of this
sampling, the size of the path is $O(\sqrt{|V|})$, allowing us to sort all nodes in $Y$ to return the minimum (see Function~\ref{alg:parent-finding}).

  \begin{function}[t]
    \caption{find-parent($s,V$)}\label{alg:parent-finding}
  \SetAlgoNoLine
\DontPrintSemicolon

    find $Y$, ancestor set of $s$ from $V$ using $|V|$ queries in parallel
    
    $m = C_1 \sqrt {|V|}$
     
     \For{$i \gets 1$ to $2$ $ \And |Y| > m$}{
             
             $S \gets$ random subset of $Y$ with $m$ elements

             sort $S$ as $x_1< ... < x_m$ using $O(m^2)$ queries in parallel

            find $Y^{\prime} = \{u \in Y \mid u \leq x_1 \} $ using $O(|Y|)$ queries in parallel, replace $Y$ with $Y^{\prime}$
     }
     \If{$|Y| \leq m$}{
        sort $Y$ using $O(m^2)$ queries in parallel
        
        \KwRet (minimum of this path)
        
        }
    \KwRet $\mathit{Null}$
    
    \end{function}

\subsection{Analysis}
The correctness of the algorithm follows from the fact that
our method first finds the root, $r$, of $T$ and then finds the parent of 
each other node, $v$ in $T$.

\begin{theorem}\label{thm:splitting-edge-finding_query_complexity}
Given a set, $V$, of nodes of a rooted tree, $T$, such as a biological
or digital phylogenetic tree, with degree bounded
by a fixed constant, $d$, we can construct $T$
using path queries with round complexity, $R(n)$, that is $O(\log n)$ and 
query complexity, $Q(n)$, that is $O(n\log n)$, with high probability.
\end{theorem}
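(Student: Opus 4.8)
The plan is to layer an analysis of the recursive procedure \cref{alg:rooted-reconstruct} on top of the root-finding subroutine. First I would settle \textbf{correctness}. By \cref{lem:max-finding}, instantiated with any standard parallel maximum-finding algorithm, the querier recovers $r$ exactly; thereafter every edge the algorithm outputs is certified by \textsf{verify-splitting-edge}, which accepts a candidate $s$ only after computing $count(s,V)$ \emph{exactly} with $|V|$ path queries, so $(parent(s),s)$ is a genuine edge of $T$ that satisfies the splitting-edge condition of \cref{def: splitting_edge}, with $parent(s)$ returned correctly by \textsf{find-parent} (which, I argue below, succeeds w.h.p.). The bipartition of $V$ into the descendants of $w$ and the rest comes from an exact $path(w,\cdot)$ sweep, and the base case is brute-forced, so the reconstructed tree equals $T$ whenever all \textsf{find-parent} calls succeed. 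Notably, concentration failures can never corrupt the output --- they can only waste an iteration.

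The crux, and the step I expect to be the main obstacle, is to prove that \textbf{one call to \textsf{find-splitting-edge} returns a genuine splitting-edge with probability $\Omega(1/d)$}. Three ingredients combine. (i) By \cref{lem:exist-seperator}, with probability at least $1/d$ the random vertex $v$ makes the path $Y$ from $r$ to $v$ contain an even-edge-separator $(x,y)$; call such a $v$ \emph{good}. (ii) For any vertex $s$, $count(s,X_s)$ with $K=\Theta(\log|V|)$ fresh samples is an unbiased, Chernoff-concentrated estimate of $K\,D(s)/|V|$, so taking $C_2$ large enough in terms of $d$, with probability $1-|V|^{-\Omega(1)}$ the threshold tests classify $s$ correctly relative to the band $[\tfrac1{d+1},\tfrac{d}{d+1}]$; moreover the separator endpoint $y$, whose ratio $D(y)/|V|$ lies in $[\tfrac1d,\tfrac{d-1}{d}]$ --- strictly inside that band, with margin $\tfrac1{d(d+1)}$ --- is classified ``middle'', while any node classified ``middle'' has ratio inside the wider splitting-edge band $[\tfrac1{d+2},\tfrac{d+1}{d+2}]$ and is therefore accepted by \textsf{verify-splitting-edge}. (iii) Because $D(\cdot)$ is monotone along $Y$, when $y\notin S$ all sampled nodes above $y$ are classified ``too big'' and all below ``too small'' (w.h.p.), so the consecutive pair $(w,z)$ that Phase~1 isolates straddles $y$; since $w$ and $z$ are consecutive order statistics of a uniform size-$m$ sample of $Y$, the sub-path passed to Phase~2 has length $O\!\bigl(\tfrac{|Y|}{m}\log|V|\bigr)=O\!\bigl(\tfrac{\sqrt{|V|}\,\log|V|}{C_1}\bigr)$, which is at most $|V|/K$ w.h.p. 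Hence, conditioned on a good $v$, with probability $1-|V|^{-\Omega(1)}$ Phase~2 runs on a short sub-path still containing $y$, detects a ``middle'' node, and verifies it; choosing the base-case cutoff $g$ (hence $C_1$ and $C_2$) large in terms of $d$ drives this conditional failure probability below any target constant, so the unconditional success probability is at least $\tfrac1d\bigl(1-o(1)\bigr)\ge\tfrac1{2d}$.

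Everything else is accounting. Since \textsf{verify-splitting-edge} only ever certifies a genuine splitting-edge, each successful split shrinks \emph{both} subproblems to at most a $\tfrac{d+1}{d+2}$ fraction of the current vertex set (\cref{def: splitting_edge}), so the recursion tree has depth $O(d\log n)=O(\log n)$ and $O(n)$ nodes, deterministically. One iteration of the \textbf{while} loop costs $O(1)$ rounds and $O(|V|)$ queries: the dominant contributions are the $\Theta(|V|)$-query parallel \textsf{path} sweeps and the $\Theta(m^2)=\Theta(|V|)$-query brute-force parallel sorts in \textsf{find-splitting-edge} and \textsf{find-parent}, each a single round, whereas the $count$ estimates use only $mK=O(\sqrt{|V|}\log|V|)$ queries in Phase~1 and $|Y|K\le|V|$ in Phase~2. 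A symmetric order-statistic argument shows \textsf{find-parent} shrinks its candidate path to $O(\sqrt{|V|})$ vertices after its two sampling rounds w.h.p., so it too succeeds w.h.p.\ in $O(1)$ rounds.

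For the \textbf{round complexity}, fix a vertex $x\in V$ and trace the chain of recursive calls that contain $x$: it has $O(\log n)$ links, the number of \textbf{while}-iterations at each link is dominated by an independent geometric variable with constant mean (success probability $\ge\tfrac1{2d}$), and each iteration takes $O(1)$ rounds, so the rounds devoted to $x$'s chain total $O(\log n)$ with probability $1-n^{-\Omega(1)}$ by a Chernoff bound for sums of geometric variables; a union bound over the at most $n$ such chains, plus the $O(\log\log n)$ rounds of root-finding, yields $R(n)=O(\log n)$ w.h.p. Finally, the currently-active recursive calls always operate on pairwise-disjoint subsets of $V$, so \emph{every} parallel round issues $O(n)$ path queries in total; multiplying by $R(n)$ gives $Q(n)=O(n\log n)$ w.h.p.
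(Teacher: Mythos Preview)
Your proposal is correct and tracks the paper's proof closely: you invoke the same ingredients (\cref{lem:exist-seperator} for the good-$v$ probability, the Chernoff estimates behind \cref{eq:1}--\cref{eq:2}, \cref{lem:elements_scattered} for sample spacing along $Y$, \cref{lem:parent_complexity} for \textsf{find-parent}), assemble them into the same $\Omega(1/d)$ success bound for \textsf{find-splitting-edge}, and finish with the same sum-of-geometrics Chernoff plus union bound over root-to-leaf chains for $R(n)$. Your query-complexity bookkeeping---``active subproblems are disjoint, hence $O(n)$ queries per round, times $R(n)$''---is a mild streamlining of the paper's recurrence $Q(n)=Q(n/(d{+}2))+Q(n(d{+}1)/(d{+}2))+O(n)$, but otherwise the two arguments coincide.
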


Our proof of \cref{thm:splitting-edge-finding_query_complexity} begins with
\cref{lem:elements_scattered}.

\begin{lemma}\label{lem:elements_scattered}
In a rooted tree, $T=(V,E,r)$, let $Y$ be a (directed) path, where 
$ |Y| > m = C_1 \sqrt{|V|}$. If we take a sample, $S$, of $m$
elements from $Y$, then with probability $1 - \frac{1}{|V|}$, every
two consecutive nodes of $S$ in the sorted order of $S$ are within
distance $O\left(\frac{|Y| \log {|V|}}{\sqrt{|V|}}\right)$ from
each other in $Y$.
\end{lemma}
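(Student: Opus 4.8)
The plan is to model the positions of the sampled nodes along the path $Y$ as a random sample from the index set $\{1, 2, \ldots, |Y|\}$ and then argue that no ``gap'' between two consecutive sampled indices is too large. Concretely, I would identify $Y$ with the integers $1, \ldots, L$ where $L = |Y|$, ordered along the path, so that $S$ corresponds to a uniform random subset of size $m = C_1\sqrt{|V|}$ of $\{1, \ldots, L\}$. A ``bad event'' is that there exist two consecutive sampled positions whose difference exceeds some threshold $\Delta = c \cdot \frac{L \log |V|}{\sqrt{|V|}}$; equivalently, there is a window of $\Delta$ consecutive indices in $\{1, \ldots, L\}$ that contains no sampled element.

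First I would bound the probability that one fixed window $W$ of $\Delta$ consecutive indices is missed by the sample. Under sampling without replacement, $\Pr[S \cap W = \emptyset] = \binom{L-\Delta}{m}/\binom{L}{m} \le (1 - \Delta/L)^m \le e^{-m\Delta/L}$. Plugging in $\Delta = c \cdot \frac{L\log|V|}{\sqrt{|V|}}$ and $m = C_1\sqrt{|V|}$ gives $e^{-c C_1 \log |V|} = |V|^{-c C_1}$. (If one prefers to sidestep the without-replacement arithmetic, sampling with replacement gives essentially the same bound, and the two models are interchangeable up to lower-order terms; alternatively one can use the standard coupling/negative-association fact that without-replacement sampling is at least as concentrated.)

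Next I would take a union bound over all ``starting positions'' of such a window — there are at most $L \le |V|$ of them (and it suffices to check windows starting at each sampled point, or simply all $L$ positions). This yields $\Pr[\text{some window of length } \Delta \text{ is empty}] \le L \cdot |V|^{-cC_1} \le |V|^{1 - cC_1}$. Choosing the constant $c$ (which is absorbed into the $O(\cdot)$ in the statement) large enough relative to $C_1$ — e.g. $cC_1 \ge 2$ — makes this at most $|V|^{-1} = 1/|V|$, as required. Since every gap between consecutive elements of $S$ in sorted order that exceeds $\Delta$ would exhibit such an empty window, the complementary event — all consecutive gaps are $O\!\left(\frac{|Y|\log|V|}{\sqrt{|V|}}\right)$ — holds with probability at least $1 - 1/|V|$.

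The only mild subtlety, and the step I would be most careful about, is the hypergeometric tail bound $\Pr[S \cap W = \emptyset] \le e^{-m\Delta/L}$: one must justify the inequality $\binom{L-\Delta}{m}/\binom{L}{m} \le (1-\Delta/L)^m$, which follows since $\binom{L-\Delta}{m}/\binom{L}{m} = \prod_{i=0}^{m-1}\frac{L-\Delta-i}{L-i} \le \prod_{i=0}^{m-1}\frac{L-\Delta}{L} = (1-\Delta/L)^m$ (each factor is at most $(L-\Delta)/L$ because $\frac{L-\Delta-i}{L-i}$ is nonincreasing in $i$ for $\Delta \ge 0$). Everything else is a routine union bound and a choice of constants, so I expect no real obstacle beyond bookkeeping.
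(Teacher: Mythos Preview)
Your proposal is correct and follows essentially the same approach as the paper: bound the probability that a length-$\Theta\!\left(\frac{|Y|\log|V|}{\sqrt{|V|}}\right)$ stretch of $Y$ contains no sample, then union-bound over all such stretches. The only cosmetic differences are that the paper partitions $Y$ into $\sqrt{|V|}/\log|V|$ disjoint sections (so fewer events in the union bound) and invokes a Chernoff bound on $|A_i|$ rather than your direct hypergeometric/empty-window estimate; neither changes the argument in any substantive way.
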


\begin{proof}
 Note that some nodes of $Y$ may be picked more than once as we
 pick $S$ in parallel. 
Divide the path $Y$ into $\frac{\sqrt{|V|}}{\log |V|}$ 
equal size sections (the difference between the size of any two
 sections is at most 1). 
For each $1 \le i \le {\frac{\sqrt{|V|}}{\log |V|}}$, 
let $A_i$ be the subset of $S$ lying in the
 $i\textsuperscript{th}$ section of $Y$. (See \cref{fig:lem_scattered}.)
It is clear that each node $s \in S$ ends up in 
section $i$ with probability $\frac{\log {|V|}}{\sqrt{|V|}}$,
 and therefore, for each $1 \le i \le {\frac{\sqrt{|V|}}{\log |V|}}$,
 ${E}\left[|A_i|\right] = C_1 \log {|V|}$. Thus, using standard a 
 Chernoff bound, $\Pr\left[|A_i| = 0\right] < \frac{1}{|V|^2}$ for any constant
 $C_1 > 6 \ln {10}$. Using a union bound, all the sections are
 non-empty with probability at least $1 - \frac{1}{|V|}$. Hence,
 the distance between any two consecutive nodes of $S$ from each
 other in $Y$ is at most $\frac{2 |Y| \log {|V|} }{\sqrt {|V|}}$.
\end{proof}

\begin{figure}[t]
  \centering
  \includegraphics[scale=.9]{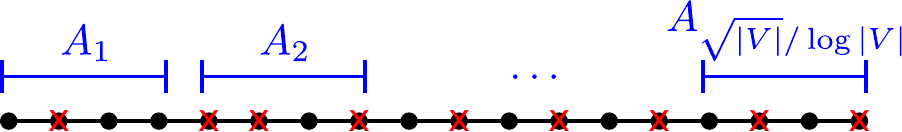}
  \caption{Illustration of how scattered sample $S$ is on path $Y$. The $i\textsuperscript{th}$ blue interval represents the $i\textsuperscript{th}$ section of $Y$, the black dots correspond to the nodes on the path $Y$, and red crossed marks represent elements of $S$. }
  \label{fig:lem_scattered}
\end{figure}

\cref{lem:elements_scattered} allows us to analyze the find-parent method, as follows.

\begin{lemma}\label{lem:parent_complexity}
  The \textsf{find-parent}$(s,V)$ method 
outputs the parent of $s$ with probability at least $1-2/|V|$, with $Q(n) \in O(n)$ and $R(n) \in O(1)$.
\end{lemma}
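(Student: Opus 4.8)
The plan is to analyze the three sources of potential failure in Function~\ref{alg:parent-finding} and bound each by $O(1/|V|)$, then account for the query and round costs. First I would establish correctness conditional on the loop terminating with $|Y| \le m$: at that point the algorithm sorts the remaining ancestor set $Y$ of $s$ and returns its minimum under the $\succ$ order. Since the parent of $s$ is exactly the $\succ$-minimal ancestor of $s$ (the lowest vertex on the root-to-$s$ path), this is correct provided the parent was never discarded in an earlier iteration. The parent is discarded only if, in some iteration, we replace $Y$ by $Y' = \{u \in Y \mid u \le x_1\}$ and the parent fails $u \le x_1$; but the parent is $\preceq$ every element of $Y \supseteq S$, in particular $\preceq x_1$, so it always survives. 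Hence conditional correctness is immediate, and the only real question is whether the $\le 2$ iterations suffice to shrink $Y$ below $m$.

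Next I would bound the probability that $|Y| > m$ still holds after two iterations. This is where \cref{lem:elements_scattered} does the work: each iteration takes a random sample $S$ of $m = C_1\sqrt{|V|}$ elements from the current $Y$ and replaces $Y$ with the elements below $x_1 = \min S$. By \cref{lem:elements_scattered}, with probability $1 - 1/|V|$ consecutive sample elements are within distance $O\!\left(\frac{|Y|\log|V|}{\sqrt{|V|}}\right)$ in $Y$; in particular the number of elements of $Y$ strictly below $x_1$ is $O\!\left(\frac{|Y|\log|V|}{\sqrt{|V|}}\right)$. So one iteration shrinks the path length from $|Y|$ to $O\!\left(\frac{|Y|\log|V|}{\sqrt{|V|}}\right)$; starting from $|Y| \le |V|$, after the first iteration $|Y| = O(\sqrt{|V|}\log|V|)$, and after the second iteration $|Y| = O(\log^2|V|) = O(\sqrt{|V|}) \le m$ for $|V|$ large enough (absorbing the constant into $C_1$, or noting the loop condition $|Y| > m$ already stops it early otherwise). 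A union bound over the (at most) two applications of \cref{lem:elements_scattered} gives failure probability at most $2/|V|$, which is the claimed $1 - 2/|V|$ success bound.

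Finally I would tally the costs. Finding the ancestor set $Y$ of $s$ uses $|V|$ path queries in one parallel round. Each of the $\le 2$ iterations does: drawing $S$ (no queries), sorting $S$ by a quadratic-work parallel sort using $O(m^2) = O(|V|)$ path queries in $O(1)$ rounds, and computing $Y'$ via $O(|Y|) = O(|V|)$ path queries in one round. The final sort of $Y$ (once $|Y| \le m$) likewise costs $O(m^2) = O(|V|)$ queries in $O(1)$ rounds. Summing over the constantly-many phases yields $Q(n) \in O(|V|) = O(n)$ and $R(n) \in O(1)$, completing the proof. The main obstacle is the second paragraph — making the ``two iterations suffice'' argument rigorous, i.e., correctly chaining the per-iteration shrink factor from \cref{lem:elements_scattered} and checking that the surviving set after two rounds genuinely drops to $O(\sqrt{|V|})$ rather than merely decreasing; everything else is bookkeeping.
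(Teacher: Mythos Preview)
Your proposal is correct and follows essentially the same approach as the paper's proof: invoke \cref{lem:elements_scattered} twice to shrink $|Y|$ first to $O(\sqrt{|V|}\log|V|)$ and then to $o(m)$, union-bound the two failure events to get $1-2/|V|$, and tally the $O(|V|)$ queries over $O(1)$ rounds in each phase. Your explicit check that the parent survives every trimming step and is recovered as the minimum is a welcome addition that the paper glosses over, but otherwise the arguments coincide.
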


\begin{proof}
  The \textsf{find-parent} method succeeds if, after the \textbf{for} loop, the size of the set of remaining ancestors of $s$, $Y$, is $|Y|\le m$, so it is enough to show that this occurs with probability at least $1-2/|V|$. By \cref{lem:elements_scattered}, the size of $Y$ at the end of the first iteration is $|Y|\in O(m\log |V|)$, with probability at least $1-1/|V|$. Similarly, a second iteration, if required, further reduces the size of $Y$ into $|Y|\in o(m)$, with probability at least $1-1/|V|$. Thus, by a union bound, the probability of success is at least $1-2/|V|$.

  The query complexity can be broken down as follows, where $m\in O(\sqrt{|V|})$: 
  \begin{enumerate}
    \item $O(|V|)$ queries in 1 round to determine the ancestor set, $Y$, of $s$.
    \item $O(m^2) + O(|Y|)\in O(|V|)$ queries in 2 rounds for each of the (at most) 2 iterations performed in \textsf{find-parent}, whose purpose is to discard non-parent ancestors of $s$ in $Y$. 
    \item $O(m^2)\in O(|V|)$ to find, in 1 round, the minimum among the remaining ancestors of $Y$ (at most $m$ w.h.p.). If $|Y|>m$, then no further queries are issued.
  \end{enumerate}
  In total, the above amounts to $Q(n) \in O(n)$ and $R(n) \in O(1)$.
\end{proof}

We next analyze the \textsf{find-splitting-edge} method.

\begin{lemma}\label{lem:expected_d_times}
Any call to \textsf{find-splitting-edge} returns true
with probability $\frac{1}{2 d}$; hence 
\cref{alg:rooted-reconstruct} calls 
\textsf{find-splitting-edge} $O(d)$ times in expectation.
\end{lemma}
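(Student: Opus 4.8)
The plan is to bound the success probability of a single call to \textsf{find-splitting-edge} from below by $\frac{1}{2d}$, and then deduce the expected number of calls from the tail of a geometric random variable. First I would condition on the event $G$ that the uniformly random vertex $v$ picked in the \textbf{while} loop of \cref{alg:rooted-reconstruct} is such that the root-to-$v$ path $Y$ contains an even-edge-separator $e^{*}=(x^{*},y^{*})$; by \cref{lem:exist-seperator} we have $\Pr[G]\ge \frac{1}{d}$. Since every even-edge-separator is a splitting-edge, on $G$ we have $\frac{|V|}{d}\le D(y^{*})\le \frac{(d-1)|V|}{d}$, the exact count $count(r,V)=|V|$ (as $path(r,\cdot)\equiv 1$), $v$ lies in the subtree rooted at $y^{*}$ so $D(v)\le D(y^{*})$, and $s\mapsto D(s)$ is strictly decreasing along $Y$. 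The heart of the argument is to show that, conditioned on $G$, \textsf{find-splitting-edge} returns a (correct) splitting-edge with probability at least $\tfrac12$ once $|V|$ exceeds the base-case threshold $g$; combining this with $\Pr[G]\ge\frac1d$ gives the claimed $\frac{1}{2d}$.

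For the conditional analysis I would first record two purely arithmetic containments, each with constant (in $d$) multiplicative slack: $\big[\tfrac1d,\tfrac{d-1}{d}\big]\subset\big(\tfrac1{d+1},\tfrac{d}{d+1}\big)$ and $\big[\tfrac1{d+1},\tfrac{d}{d+1}\big]\subset\big(\tfrac1{d+2},\tfrac{d+1}{d+2}\big)$. Every estimate $count(s,X_s)$ is a sum of $K=C_2\log|V|$ i.i.d.\ Bernoulli$(D(s)/|V|)$ trials, so a multiplicative Chernoff bound, with $C_2=C_2(d)$ chosen large enough to absorb the constant slacks, gives that simultaneously for all sampled $s$ (union bound over $O(|V|)$ nodes) the estimate is within a $(1\pm\delta)$ factor of its mean $K\,D(s)/|V|$. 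The first containment then yields that w.h.p.\ the estimate at $y^{*}$ lands in the band $[\tfrac{K}{d+1},\tfrac{Kd}{d+1}]$, so whenever $y^{*}$ is among the nodes being estimated — always in \textbf{Phase 2}, and possibly in \textbf{Phase 1} — the ``$\exists s$ in the band'' branch fires; the second containment, again with concentration, ensures that any $s$ with estimate in the band has $D(s)\in\big[\tfrac{|V|}{d+2},\tfrac{(d+1)|V|}{d+2}\big]$, so the exact recount performed by \textsf{verify-splitting-edge} (together with a successful \textsf{find-parent}, cf.\ \cref{lem:parent_complexity}) confirms $s$ as a splitting-edge rather than returning $\mathit{Null}$. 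The same facts show the two $\mathit{Null}$ guards are not triggered on $G$: $count(r,X_r)=K>\tfrac{K}{d+1}$ deterministically, and $D(v)\le D(y^{*})\le\tfrac{(d-1)|V|}{d}$ keeps $count(v,X_v)$ below $\tfrac{Kd}{d+1}$ w.h.p.

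The one genuinely delicate point, which I expect to be the main obstacle, is the \textbf{Phase 1} narrowing step: after sorting $S$ and selecting consecutive $w,z$ with $count(w,X_w)>\tfrac{Kd}{d+1}$ and $count(z,X_z)<\tfrac{K}{d+1}$ (such a pair exists since $r$ is ``large'', $v$ is ``small'' w.h.p., and we are in the case where no estimate is in the band), I must show the retained segment of $Y$ from $w$ to $z$ still contains $y^{*}$ and is short enough to survive \textbf{Phase 2}'s length guard $|Y|\le |V|/K$. For membership I would argue by comparing expectations against thresholds: if $D(w)\le D(y^{*})\le\tfrac{(d-1)|V|}{d}$ then $E[count(w,X_w)]\le\tfrac{K(d-1)}{d}$, a constant factor $\tfrac{d^2}{d^2-1}$ below $\tfrac{Kd}{d+1}$, so Chernoff rules this out w.h.p.; hence $D(w)>D(y^{*})$ and, by strict monotonicity of $D$ along $Y$, $w$ is strictly above $y^{*}$, and symmetrically $count(z,X_z)<\tfrac{K}{d+1}$ forces $D(z)<\tfrac{|V|}{d}\le D(y^{*})$ so $z$ is strictly below $y^{*}$, placing $y^{*}$ on the $w$-to-$z$ segment. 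For length I would invoke \cref{lem:elements_scattered}: with probability $1-\tfrac1{|V|}$ consecutive elements of $S$ are $O\!\big(\tfrac{|Y|\log|V|}{\sqrt{|V|}}\big)=O(\sqrt{|V|}\log|V|)$ apart, which is below $|V|/K$ for $|V|>g$ with $g$ chosen suitably. Union-bounding over the $O(|V|)$ Chernoff events, the scattering event, and the \textsf{find-parent} event (each $O(|V|^{-c})$ for a large enough constant exponent) gives conditional success probability $\ge 1-O(|V|^{-1})\ge\tfrac12$, hence overall $\ge\frac{1}{2d}$. Finally, since each iteration of the \textbf{while} loop uses fresh independent randomness, the number of calls to \textsf{find-splitting-edge} at any recursion level is dominated by a geometric variable with parameter $\ge\frac{1}{2d}$, whose expectation is $\le 2d\in O(d)$.
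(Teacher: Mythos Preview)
Your proposal is correct and mirrors the paper's argument closely: condition on the random $v$ having an even-edge-separator on its root path (\cref{lem:exist-seperator}), then step through both phases of \textsf{find-splitting-edge} using Chernoff-type threshold bounds (the paper's \cref{eq:1} and \cref{eq:2}) to show that the early $\mathit{Null}$ returns are avoided, that the $w$-to-$z$ subpath still contains $y^*$, that \cref{lem:elements_scattered} handles the length guard, and that \textsf{verify-splitting-edge}/\textsf{find-parent} succeed, all with $O(1/|V|)$ total failure probability and hence conditional success $\ge \tfrac12$. One minor caveat: your blanket statement that every estimate lies within a $(1\pm\delta)$ factor of its mean cannot hold uniformly over all sampled $s$ (the multiplicative Chernoff bound is vacuous when $D(s)/|V|$ is tiny), but your argument only ever needs the one-sided threshold-crossing form---exactly what the paper isolates as \cref{eq:1} and \cref{eq:2}---and your subsequent case analyses for $w$, $z$, and the candidate test use that form correctly; incidentally, your observation that $count(r,X_r)=K$ holds deterministically is cleaner than the paper's treatment of $r$.
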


\begin{proof}
By \cref{lem:exist-seperator}, we know that if we pick a vertex $v$, 
uniformly at random, then with probability $\frac{1}{d}$, 
an even-edge-separator lies on the path from $r$ to $v$. 
We show that if there is such an even-edge-separator (Definition~\ref{def:even_separator}) on that path, \textsf{find-splitting-edge($v,Y,V$)} returns a 
splitting-edge (Definition~\ref{def: splitting_edge}) with probability at least $\frac{1}{2}$, and otherwise returns $\mathit{Null}$. 
Using an intricate Chernoff-bound analysis (see \cref{appendix:chernoff}), we can prove
that there exists a constant $C_2 > 0$, as used in line~\ref{split_3} of 
\cref{alg:splitting-edge-finding} such that the following probability bounds always hold:
\begin{equation}\label{eq:1}
  \begin{cases}
        
        \Pr\left(count(s,X_s) \geq \frac{K}{d+1}\right) \geq 1 - \frac{1}{|V|^2} & \mbox{if }count(s,V) \geq \frac{|V|}{d}, \\ 
        \Pr\left(count(s,X_s) \leq K  \frac{d}{d+1}\right) \geq 1 - \frac{1}{|V|^2} & \mbox{if } count(s,V) \leq |V|  \frac{d-1}{d},\\
               \end{cases}
 \end{equation}
 \begin{equation}\label{eq:2}
  \begin{cases}
 
        \Pr\left(count(s,X_s) < \frac{K}{d+1}\right) \geq 1 - \frac{1}{|V|^2} & \mbox{if }count(s,V) < \frac{|V|}{d+2}, \\
        \Pr\left(count(s,X_s) > K  \frac{d}{d+1}\right) \geq 1 - \frac{1}{ |V|^2} & \mbox{if }count(s,V) > |V|  \frac{d+1}{d+2} \\
        
        \end{cases}
 \end{equation}
 
It is clear that we either return a splitting-edge or $\mathit{Null}$ when passing through \textsf{verify-splitting-edge}. We break the probability of returning $\mathit{Null}$ according to the phases. We call a vertex $v$ \textbf{ineligible} if $count(v,V) < \frac{|V|}{d+2}$ or $count(v,V) > \frac{|V| (d+1)}{d+2}$ ($(parent(v),v)$ is not a splitting-edge). On the other hand, we call vertex $v$ \textbf{candidate} if after estimating the number of its descendants: $ \frac{K}{d+1} \leq count(v,X) \leq \frac{K  d}{d+1}$. Let $(a,b)$ be an even-edge-separator on path $Y$.

\textbf{Phase 1:}
    \begin{itemize}
        \item lines~\ref{split_10},\ref{split_11}: 
By \cref{def:even_separator}, $ \frac{|V|}{d} \leq count(b,V) \leq \frac{|V|  (d-1)}{d}$. We add $\{r,v\}$ to $S$ in line~\ref{alg:split_6} of the algorithm (the two endpoints of path $Y$). Notice that $\frac{|V|}{d} \leq count (b,V) \leq count(r,V)$ and that $ count(v,V) \leq count (b,V) \leq \frac{|V| (d-1)}{d} $. So, by \cref{eq:1}, with probability at least $1 - \frac{2}{|V|^2}$, $count(r,X_r) \geq \frac{K}{d+1}$ and $count(v,X_v) \leq \frac{K d}{d+1}$, and consequently, we don't return $\mathit{Null}$ in lines~\ref{split_10},\ref{split_11} of the algorithm.
        
        \item line~\ref{split_12}:
        \cref{eq:2} shows that an ineligible node is not a candidate with probability $1 - \frac{1}{|V|^2}$. Thus, by a union bound, none of our candidates is ineligible in line~\ref{split_12} with probability at least $1 - \frac{|S|}{|V|^2}$. Moreover, if there exists a candidate $s$ in $S$, the algorithm outputs a splitting-edge $(parent(s), s)$ with probability at least $1-\frac{2}{|V|}$, by \cref{lem:parent_complexity}.

        \item lines~\ref{split_15},\ref{split_16}: Let us partition $S$ into $S_l$ and $S_r$ (see \cref{fig:trim_path}), as follows:
        \begin{equation*}
    S_l = \{ s \in S \mid count(s,V) > count(b,V) \}, \qquad S_r = \{ s \in S \mid count(s,V) < count(b,V) \}
\end{equation*}
        Then, by definition of $b$:
        \begin{align*}
         & \forall s \in S_l: & count(s,V) > |V|/d, \qquad
        &   \forall s \in S_r: & count(s,V) < |V|(d-1)/d.
        \end{align*}
        and thus, by \cref{eq:1} and a union bound, we have with probability at least $1 - \frac{|S|}{|V|^2}$:
        \begin{align*}
          & \forall s \in S_l: & count(s,X_s) \ge K/(d+1), \qquad
          & \forall s \in S_r: & count(s,X_s) \le Kd/(d+1). 
        \end{align*}
        Finally, since $S$ does not contain any candidate nodes (otherwise we would have picked them in line~\ref{split_12}), the above inequalities imply that:
        \begin{align*}
          & \forall s \in S_l: & count(s,X_s)  > Kd/(d+1), \qquad & \forall s \in S_r:
          &  count(s,X_s) < K/(d+1).
        \end{align*}
        Therefore, $w\in S_l$ and $z\in S_r$, which implies that the subpath from $w$ to $z$ in $Y$ must include vertex $b$. This means that with probability $1 - O(\frac{1}{|V|})$, we either find a splitting-edge in this phase or pass $b$ to the next phase.
    \end{itemize}
Now, consider \textbf{Phase 2:}
    \begin{itemize}
        \item line~\ref{split_17}: 
        Here, if $|Y| > |V|/K $, then we have passed through Phase 1. Using \cref{lem:elements_scattered}, we know that since $S$ was a sample of size $C_1 \sqrt{|V|}$ from $Y$, with probability $1 - \frac{1}{|V|}$  the distance between any two consecutive nodes of $S$ in $Y$ was $O(|Y| \frac{\log {|V|}}{\sqrt{|V|}}) \in O(\sqrt{|V|} \log {|V|})$. Thus, the size of path $Y$ after passing through line~\ref{split_15} is at most $O(\sqrt{|V|} \log{|V|})$. Thus, the probability of returning $\mathit{Null}$ in line~\ref{split_17} is at most $\frac{2}{|V|}$.

        \item line~\ref{split_21}: By \cref{eq:2}, with probability at least $1 - \frac{|Y|}{|V|^2}$, no ineligible node is between candidate set. Besides, for a candidate node $s$, the algorithm outputs a splitting-edge $(parent(s), s)$ with probability at least $1-\frac{2}{|V|}$, by \cref{lem:parent_complexity}.
        
        \item line~\ref{split_22}: The probability that we return $\mathit{Null} $ here is equal to the probability that our candidate set in line~\ref{split_21} is empty. By \cref{eq:1}, with probability at least $1 - \frac{2}{|V|^2}$, $b$ is between candidates at line~\ref{split_21} and candidate set is non-empty.
        Thus, the total probability of failing to return a splitting-edge in this phase is at most $O(\frac{1}{|V|})$.

    \end{itemize}

Therefore, for $|V|$ greater than the chosen constant $g$, the probability of returning $\mathit{Null}$ in the existence of an even-edge-separator is at most $ O(\frac{1}{|V|}) \leq 1/2$. Thus, the probability of returning a splitting-edge in any call to \textsf{find-splitting-edge} is at least $\frac{1}{2d}$.
  \end{proof}

\begin{figure}
  \centering
  \includegraphics[scale=.9]{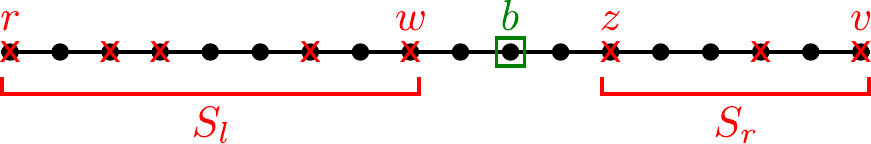}
  \caption{Illustration of the path reduction in Phase 1 of \textsf{find-splitting-edge}. At the end of this phase, the path $Y$ is trimmed down into the subpath consisting of the nodes between $w$ and $z$, which contains $b$ w.h.p.}
  \label{fig:trim_path}
\end{figure}

\begin{lemma}\label{lem:splitting_edge_query_complexity}
The subroutine \textsf{find-splitting-edge}$(v,Y,V)$ has query complexity, $Q(n)$, that is $O(|V|)$, and round complexity, $R(n)$, that is $O(1)$.
\end{lemma}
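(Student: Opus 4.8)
The plan is to prove \cref{lem:splitting_edge_query_complexity} by a straightforward accounting of the queries and rounds used by each portion of \cref{alg:splitting-edge-finding}, bounding everything by $O(|V|)$ queries in $O(1)$ rounds. First I would observe that the algorithm sets $m = C_1\sqrt{|V|}$ and $K = C_2\log |V|$, so that the recurring quantities $mK$, $m^2$, and $|V|/K$ all satisfy $mK \in O(\sqrt{|V|}\log|V|) \subseteq O(|V|)$ and $m^2 \in O(|V|)$. Every parallel block either issues a $count$ query (which costs $K$ individual path queries), a single $path$ query, or invokes a subroutine whose cost is already known.

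Next I would walk through the two phases in order. In \textbf{Phase 1}, the \texttt{ForPar} over $s\in S$ (with $|S| = m+2$) performs $count(s,X_s)$ for each $s$, i.e.\ $|S|\cdot K \in O(mK) \subseteq O(|V|)$ queries in a single round. The \texttt{ForPar} over pairs $\{a,b\}\in S$ (line~\ref{alg:split_13}) performs $O(m^2) \in O(|V|)$ path queries in one round, and the subsequent sorting/path-trimming at lines~\ref{split_15}--\ref{split_16} uses the results already gathered (locally computed sorted order of $S$, then identifying which nodes of $Y$ lie between $w$ and $z$), costing at most $O(|Y|) \subseteq O(|V|)$ queries in one additional round. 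In \textbf{Phase 2}, after the size check at line~\ref{split_17} guarantees $|Y| \le |V|/K$, the \texttt{ForPar} over $s\in Y$ performs $count(s,X_s)$ for each, costing $|Y|\cdot K \le (|V|/K)\cdot K = |V|$ queries in one round. The calls to \textsf{verify-splitting-edge} invoke \textsf{find-parent}, which by \cref{lem:parent_complexity} costs $O(|V|)$ queries in $O(1)$ rounds, and \textsf{verify-splitting-edge} itself issues $count(s,V) \in O(|V|)$ queries in one round. Summing a constant number of $O(|V|)$-query, $O(1)$-round blocks yields $Q(n) \in O(|V|)$ and $R(n) \in O(1)$.

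The one point requiring slight care — and the main "obstacle," though it is mild — is ensuring that in Phase 2 the bound $|Y| \le |V|/K$ genuinely holds \emph{before} we spend the $|Y|\cdot K$ queries, so that this cost does not blow up: this is exactly what the explicit test at line~\ref{split_17} enforces (returning $\mathit{Null}$ otherwise), so the query bound is deterministic regardless of whether Phase 1 successfully shrank the path. Note this lemma is purely about worst-case query and round costs, not about correctness or success probability, so no probabilistic argument is needed here; the high-probability guarantee that $|Y|$ actually does become small after Phase 1 is handled separately (in \cref{lem:expected_d_times}) and is irrelevant to the complexity claim. I would close by remarking that since \cref{alg:rooted-reconstruct} calls \textsf{find-splitting-edge} until success, the $O(|V|)$ / $O(1)$ bound here is the per-call cost, which combined with the $O(d)$ expected number of calls from \cref{lem:expected_d_times} feeds into the overall recursion analysis in the proof of \cref{thm:splitting-edge-finding_query_complexity}.
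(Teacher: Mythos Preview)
Your proposal is correct and follows essentially the same approach as the paper: a phase-by-phase accounting of the queries and rounds, using the identities $mK\in O(\sqrt{|V|}\log|V|)$, $m^2\in O(|V|)$, and $|Y|\cdot K\le |V|$ after the line~\ref{split_17} check, together with \cref{lem:parent_complexity} for the cost of \textsf{find-parent}. Your explicit observation that line~\ref{split_17} makes the Phase~2 bound deterministic (so no probabilistic argument is needed here) is a useful clarification that the paper leaves implicit.
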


\begin{proof}
  The queries done by \textsf{find-splitting-edge}$(v,Y,V)$, in the worst case, can be broken down as follows, where $m=O(\sqrt{|V|})$:
  
\textbf{Phase 1:} A total of $O(|V|)$ queries in $O(1)$ rounds, consisting of:
    \begin{itemize}
      \item $O(mK) \in O(\sqrt{|V|}\log |V|)$ queries in one round for estimating the number of descendants for the $m$ samples.
      \item $O(m^2) \in O(|V|)$ queries in one round for sorting the $m$ samples.
      \item $O(|Y|) \in O(|V|)$ queries in a round to find the subpath of $Y$ that is the input for Phase 2.
    \item $O(|V|)$ queries in $O(1)$ rounds for determining the parent of $s$ (see \cref{lem:parent_complexity}).
    \end{itemize}

\textbf{Phase 2:} If we enter this phase, it spends $O(|V|)$ queries in $O(1)$ rounds:
    \begin{itemize}
      \item $|Y| \cdot K \in O(|V|)$ queries in one round to evaluate \textsf{count($s, X_s$)} for each $s \in Y$.
      \item $O(|V|)$ queries in one round to find the number of descendants of node $s$.
      \item $O(|V|)$ queries in $O(1)$ rounds to determine the parent of $s$ (see \cref{lem:parent_complexity}).
    \end{itemize}
  Overall, the above break down amounts to $Q(n) \in O(n)$ and $R(n) \in O(1)$.
  \end{proof}

Now, recall \cref{thm:splitting-edge-finding_query_complexity}:
\textit{
Given a set, $V$, of nodes of a rooted tree, $T$, such as a biological
or digital phylogenetic tree, with degree bounded
by a fixed constant, $d$, we can construct $T$
using path queries with round complexity, $R(n)$, that is $O(\log n)$ and 
query complexity, $Q(n)$, that is $O(n\log n)$, with high probability.
}

\begin{proof}
  The expected query complexity $Q(n)$ of \cref{alg:rooted-reconstruct} is dominated by the two recursive calls $\left(Q\left(\frac{n}{d+2}\right)\text{ and }Q\left(\frac{n(d+1)}{d+2}\right)\right)$ and the calls to \textsf{find-splitting-edge}. By \cref{lem:expected_d_times}, we call \textsf{find-splitting-edge} an expected $O(d)$ times, incurring a cost of $O\left(dn\right) \in O(n)$ path queries in $O(d) \in O(1)$ rounds (see \cref{lem:splitting_edge_query_complexity}).
  Thus, $Q(n)$ and $R(n)$ are:
 \[
  Q(n) = Q\left(\frac{n}{d+2}\right) + Q\left(\frac{n(d+1)}{d+2}\right) + O\left(n \right) ,
  \] 
  \[ 
  R(n) = \max \left( R\left(\frac{n}{d+2}\right), R\left(\frac{n(d+1)}{d+2}\right) \right) + O(1)
 \]
which shows it needs $Q(n) \in O(n \log n)$ and $R(n) \in O( \log n)$ in expectation. To prove the high probability results, note that the main algorithm is a divide-and-conquer
algorithm with two recursive calls per call; hence, it can be modeled with
a recursion tree that is a binary tree, $B$, with height
$h=O(\log_{\frac{d+2}{d+1}} n)=O(\log n)$. 
For any root-to-leaf path in $B$, the time taken can modeled as
a sum of independent random variables,
$X=X_1+X_2+\cdots+X_h$, where each $X_i$ is the number of calls to
\textsf{find-splitting-edge} (each of which uses $O(|V|)$ queries in $O(1)$ rounds)
required before it returns true,
which is a geometric random variable with parameter $p=\frac{1}{2 d}$.
Thus, by a Chernoff bound for sums of independent geometric random variables
(e.g., see~\cite{gt-adfai-02,DBLP:books/daglib/0012859}), 
the probability that $X$ is more than
$O(d\log_{\frac{d+2}{d+1}} n)$ is at most $1/n^{c+1}$, for any given constant $c\ge 1$.
The theorem follows, then, by a union bound for the $n$ root-to-leaf paths in $B$.
\end{proof}

\subsection{Lower Bound}\label{sec:lower-bound}
We establish the following simple lower bound,
which extends and corrects lower-bound proofs
of Wang and Honorio~\cite{DBLP:conf/allerton/WangH19}.

\begin{figure}[b!]
  \centering
  \includegraphics[width=0.45\linewidth]{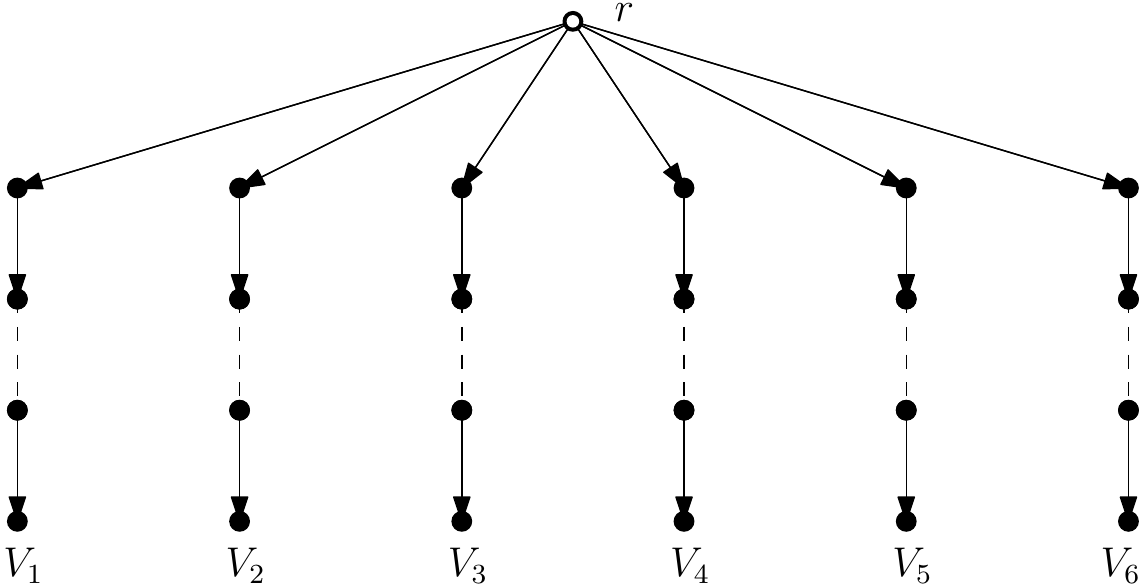}
  \caption{Illustration of the $\Omega(dn+n\log n)$ lower bound 
    for path queries in directed rooted trees
    (shown for $d=6$).
}
  \label{fig:lower_bound}
\end{figure}

 \begin{theorem}\label{thm:lower-bound}
Reconstructing an $n$-node, degree-$d$ tree requires
$\Omega(dn+n \log n)$ path queries.
This lower bound holds for the worst case of a deterministic algorithm and for
the expected value of a randomized algorithm.
\end{theorem}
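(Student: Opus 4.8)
The plan is to establish the two terms of the lower bound, $\Omega(n\log n)$ and $\Omega(dn)$, via separate information-theoretic / adversary arguments, and then combine them. For the $\Omega(n\log n)$ term, I would use a standard counting argument: the number of distinct rooted labeled trees on $n$ nodes with maximum degree $d$ (for $d$ a fixed constant $\ge 3$) is $2^{\Omega(n\log n)}$, since already the number of distinct rooted labeled paths is $n!$, and a path has degree $2$. Each path query returns one bit, so any deterministic algorithm that distinguishes all such trees must make at least $\log_2(n!) = \Omega(n\log n)$ queries in the worst case. For the randomized case, I would apply Yao's principle: put the uniform distribution on this family of $2^{\Omega(n\log n)}$ trees, and observe that a decision tree of depth $o(n\log n)$ has $2^{o(n\log n)}$ leaves, hence correctly identifies only a $2^{-\Omega(n\log n)}$ fraction of inputs in expectation, so the expected number of queries of any randomized algorithm is $\Omega(n\log n)$.

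For the $\Omega(dn)$ term, I would use the adversary construction suggested by \cref{fig:lower_bound}: a ``broom'' family where the root has $d-1$ children, and hanging off each such child is a long path, with the $n - d$ non-root, non-hub leaves distributed among the $d-1$ path-bottoms — or more simply, a star-of-paths where each of the $\Theta(n)$ leaves is assigned to one of $d-1$ ``branches'' below the root. Determining, for a given leaf $v$, which of the $d-1$ branches it lies on requires, in the worst case, $\Omega(d)$ path queries of the form $path(u,v)$ for branch-representatives $u$: an adversary answers ``$0$'' (not an ancestor) to the first $d-2$ such queries, remaining consistent with $v$ being on the last, untested branch. Since there are $\Omega(n)$ leaves whose branch membership must be resolved and the adversary can play this game independently for each, we get $\Omega(dn)$ queries total. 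A little care is needed to make the per-leaf lower bounds genuinely additive — I would formalize this by a potential/counting argument: the number of trees in the family is $(d-1)^{\Omega(n)} = 2^{\Omega(n\log d)}$, and again a depth-$D$ decision tree has $2^D$ leaves, forcing $D = \Omega(n\log d)$; combined with a direct adversary argument that even locating a single leaf's branch among $d-1$ options costs $\Omega(d)$ (not merely $\Omega(\log d)$) queries because path queries only test one candidate ancestor at a time, we recover the stronger $\Omega(dn)$ bound. The randomized version again follows from Yao's principle applied to the uniform distribution over branch-assignments.

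Finally, I would combine the two bounds. Since both lower-bound families can be realized as degree-$d$ $n$-node trees, and $\max(\Omega(dn), \Omega(n\log n)) = \Omega(dn + n\log n)$, the theorem follows; the randomized statement follows because Yao's principle was invoked for each part. The main obstacle I anticipate is the $\Omega(dn)$ adversary argument — specifically, arguing rigorously that the $\Omega(d)$ cost of resolving each leaf's branch is additive across the $\Omega(n)$ leaves rather than being amortized away (e.g., a clever algorithm might hope to learn the branch structure once and reuse it). The resolution is that the structure being learned \emph{is} the per-leaf branch assignment itself — there is no shared structure to amortize over beyond the $d-1$ hub children, which is lower-order — so an adversary maintaining, for each still-ambiguous leaf, a set of consistent branches, and collapsing that set by at most one per query, forces the full $\Omega(dn)$ count. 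This is precisely the point where the Wang–Honorio argument needed extending/correcting, so I would present this step in full detail.
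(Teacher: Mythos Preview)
Your proposal is correct and uses the same two ingredients as the paper---the broom (star-of-chains) construction for the $\Omega(dn)$ term and a sorting/counting argument for the $\Omega(n\log n)$ term---but organizes them differently. The paper works with a \emph{single} hard family (a root with $d$ children, each heading a chain) and partitions the queries any algorithm issues into \emph{internal} (both endpoints in the same chain) and \emph{external} (endpoints in different chains); it then argues that $\Omega(dn)$ external queries are needed to assign vertices to chains while $\Omega(n\log n)$ internal queries are needed to sort one long chain, so the two costs \emph{add} directly rather than requiring a $\max$. Your route---two separate families combined via $\max(\Omega(dn),\Omega(n\log n))=\Omega(dn+n\log n)$---is equally valid but forgoes that additive structure. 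On the other hand, your treatment of the additivity obstacle in the $\Omega(dn)$ argument (the adversary maintaining a consistent-branch set per leaf, shrinking by at most one per query) is more explicit than the paper's, and your invocation of Yao's principle for the randomized case is cleaner than the paper's brief appeal to randomized sorting lower bounds.
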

\begin{proof}
Consider an $n$-node, degree-$d$ tree, $T$, as shown in \cref{fig:lower_bound},
which consists of a root, $r$, with $d$ children, each of which is the root
of a chain, $T_i$, of at least one node rooted at a child of $r$.
Since a querier, Bob,
can determine the root, $r$, in $O(n)$ queries anyway, let us assume
for the sake of a lower bound 
that $r$ is known; hence, no additional information is gained by path queries
involving the root.
Let us denote the vertices in chain $T_i$ as $V_i$.
In order to reconstruct $T$, Bob must determine 
the nodes in each $V_i$ and must also determine their order in $T_i$.
For a given path query,
$path(u,v)$, say this query is \emph{internal} if $u,v\in V_i$, for some
$i\in[1,d]$, and this query is \emph{external} otherwise.
Note that even if Bob knows the full structure of $T$ except for 
a given node, $v$, he must perform at least $d-1$ external queries in the worst case,
for a deterministic algorithm, or $\Omega(d)$ external queries in expectation, for
a randomized algorithm, in order to determine the chain, $T_i$, to which
$v$ belongs.
Furthermore, the result of an (internal or external) query, $path(u,v)$,
provides no additional information for a vertex $w$ distinct from $u$ and $v$
regarding the set, $V_i$, to which $w$ belongs.
Thus, Bob must perform $\Omega(d)$ external queries for each vertex $v\not=r$,
i.e., he must perform $\Omega(dn)$ external queries in total.
Moreover, note that the results of external queries involving a vertex, $v$,
provide no information regarding the location of $v$ in its chain, $T_i$.
Even if Bob knows all the vertices that comprise each $V_i$, he must determine the
ordering of these vertices in the chain, $T_i$, in order to reconstruct $T$.
That is, Bob must \emph{sort} the vertices in $V_i$ using a comparison-based
algorithm, where each comparison is an
internal query involving two vertices, $u,v\in V_i$.
By well-known sorting lower bounds (which also hold in expectation for randomized
algorithms), e.g., see~\cite{gt-adfai-02, DBLP:books/daglib/0023376},
determining the order of the vertices in each $T_i$ requires
$\Omega(|V_i| \log |V_i|)$, as one of the chain can be as great as $n-d$ vertices, then he needs $\Omega(n \log n)$ internal queries. 
\end{proof}

\begin{corollary}\label{cor:lower_bound_time}
\cref{alg:rooted-reconstruct} is optimal for bounded-degree trees 
when asking $\theta({n})$ queries per round.
\end{corollary}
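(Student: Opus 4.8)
The plan is to combine the two parts of the lower bound in \cref{thm:lower-bound} with the upper bounds established for \cref{alg:rooted-reconstruct}. Recall that \cref{thm:splitting-edge-finding_query_complexity} shows that \cref{alg:rooted-reconstruct} achieves $Q(n)\in O(n\log n)$ and $R(n)\in O(\log n)$, w.h.p., on a degree-$d$ tree where $d$ is a fixed constant. Since $d=O(1)$, the lower bound of $\Omega(dn+n\log n)=\Omega(n\log n)$ from \cref{thm:lower-bound} matches the query complexity; hence, \cref{alg:rooted-reconstruct} is query-optimal among all (even randomized) algorithms for bounded-degree trees.

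For the round complexity, the argument I would give is the standard parallel-slackness observation: if each round issues $\Theta(n)$ queries, then in $R(n)$ rounds the algorithm issues $\Theta(n\cdot R(n))$ queries in total. Combining this with the $\Omega(n\log n)$ query lower bound forces $R(n)\in\Omega(\log n)$. Since \cref{thm:splitting-edge-finding_query_complexity} already gives $R(n)\in O(\log n)$ with $\Theta(n)$ queries per round (a point worth verifying from the recurrences in that proof: each level of the recursion tree touches a partition of $V$ and performs a linear number of queries in $O(1)$ rounds), the round complexity is also optimal in this regime. Therefore \cref{alg:rooted-reconstruct} is simultaneously optimal in $Q(n)$ and $R(n)$ when asking $\theta(n)$ queries per round, which is exactly the claim of \cref{cor:lower_bound_time}.

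The only subtlety — and the one place I would be careful — is making sure the ``$\Theta(n)$ queries per round'' hypothesis is compatible with the w.h.p. guarantees: the algorithm's per-round query count can fluctuate, so I would phrase the corollary as a statement about algorithms constrained to (or charged for) $\theta(n)$ queries per round, matching the phrasing already used in \cref{cor:bio_optimal}. Given that framing, the proof is essentially a one-line deduction — divide the $\Omega(n\log n)$ query bound by the $\Theta(n)$ per-round budget — and there is no real obstacle beyond stating it cleanly.

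\begin{proof}
Since $d$ is a fixed constant, \cref{thm:lower-bound} gives an $\Omega(dn + n\log n) = \Omega(n\log n)$ lower bound on $Q(n)$, which matches the $O(n\log n)$ query complexity of \cref{alg:rooted-reconstruct} from \cref{thm:splitting-edge-finding_query_complexity}; hence, \cref{alg:rooted-reconstruct} is query-optimal. Moreover, an algorithm that issues $\theta(n)$ queries per round over $R(n)$ rounds issues $\theta(n\cdot R(n))$ queries in total, so the same $\Omega(n\log n)$ bound forces $R(n)\in\Omega(\log n)$. As \cref{thm:splitting-edge-finding_query_complexity} achieves $R(n)\in O(\log n)$ while performing $\theta(n)$ queries per round, the round complexity is optimal as well.
\end{proof}
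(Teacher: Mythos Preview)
Your proof is correct and follows essentially the same approach as the paper: match $Q(n)$ against the $\Omega(n\log n)$ lower bound (using $d=O(1)$), then divide the query lower bound by the $\theta(n)$ per-round budget to get $R(n)\in\Omega(\log n)$. The paper states the round lower bound as $\Omega(d+\log n)$ before collapsing it via bounded degree, but otherwise the argument is identical.
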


The query complexity of Algorithm~\ref{alg:rooted-reconstruct} matches the lower bound provided by \cref{thm:lower-bound} when $d$ is constant. Besides, we need $\Omega(d + \log {n})$ rounds if we have $\theta({n})$ processors; hence, the round complexity of \cref{alg:rooted-reconstruct} is also optimal.
\section{Experiments}
\label{sec:exp}
Given that both of our algorithms are randomized and perform optimally
with high probability, we carried out experiments to analyze the
practicality of our algorithms
and compare their performance
with the best known algorithms for reconstructing
rooted trees. \footnote{The complete source code for our experiments, including the implementation of our algorithms and the algorithms we compared against, is available at {{\href{https://github.com/UC-Irvine-Theory/ParallelTreeReconstruction}{\textsf{github.com/UC-Irvine-Theory/ParallelTreeReconstruction}}}}~.}
\begin{figure}[b!]
\centering
\includegraphics[width=.7\linewidth,trim=0 1mm 0 9mm,clip]{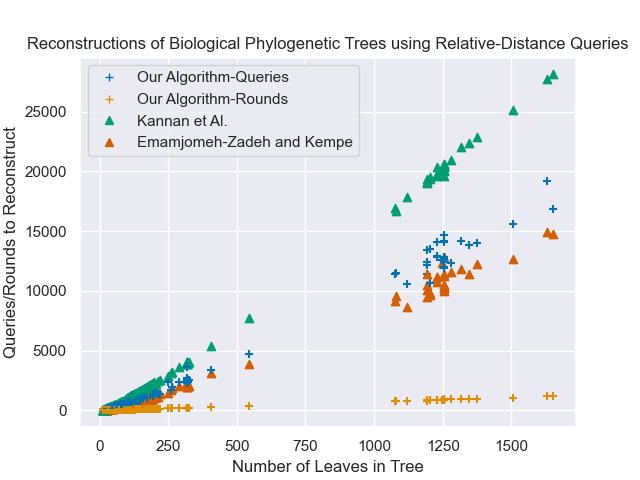}
\caption{A scatter plot showing the number of queries and rounds for each of the three tree reconstruction algorithms for real trees from TreeBase. Since our algorithm is parallel, we include round complexity to serve as a comparison for the sequential complexity.}
\label{fig:phyloexp}
\end{figure}
\subsection{Reconstructing Biological Phylogenetic Trees}
In order to assess the practical performance of \cref{phylo:overview}, 
we performed experiments using synthetic phylogenetic trees and
real-world biological phylogenetic trees from the phylogenetic library
TreeBase~\cite{piel2009treebase}, which is a database of biological phylogenetic trees,
comprising over 100,000 distinct taxa in total.

We implemented an oracle
interface, instantiated it
with the relevant trees, and implemented our algorithm along with two other
phylogenetic tree reconstruction algorithms that use relative-distance queries. 
The first is by Emamjomeh-Zadeh and
Kempe~\cite{DBLP:conf/soda/Emamjomeh-Zadeh18}, which is a randomized sequential divide-and-conquer 
algorithm.
The second is by Kannan~{\it et al.}~\cite{DBLP:journals/jal/KannanLW96}, where they use a 
sequential deterministic procedure reminiscent of insertion-sort.
All three algorithms achieve the optimal asymptotic
query complexity of $\Theta(n\log n)$ in
expectation.

\subparagraph*{Real Data.}
We instantiated our oracles
with 1,220 biological phylogenetic trees from 
the TreeBase collection and used them to run all three algorithms. The results, shown in \cref{fig:phyloexp}, suggest that our algorithm
outperforms the algorithm by Kannan~{\it et al.}, both in
terms of its round complexity and query complexity. 
However our algorithm almost matches Emamjomeh-Zadeh and Kempe's in terms of total queries and we believe the small difference is a direct result of the cost incurred while parallelizing the link step of \cref{phylo:overview}. It remains clear that \cref{phylo:overview} outperforms the two other algorithms when considering the parallel speed-up.

\begin{figure}[tb]
\centering
\includegraphics[width=.72\linewidth,trim=0 2mm 0 9mm,clip]{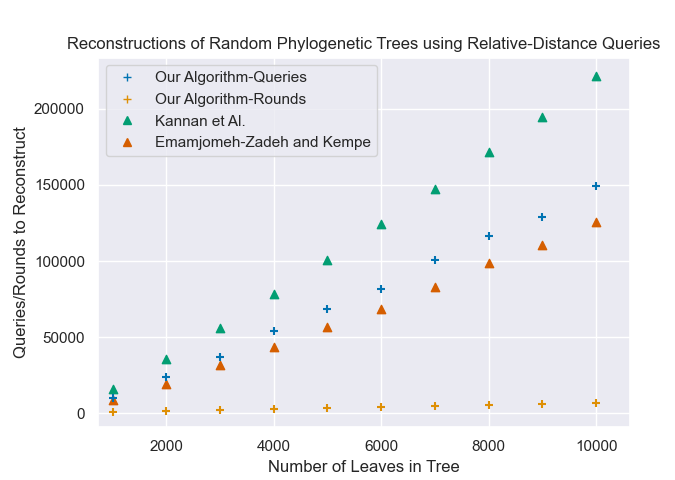}
\caption{A plot showing the average number of queries and rounds for each of the three tree reconstruction algorithms. Each data point represents the average for 10 randomly generated trees. }
\label{fig:phyloexpr}
\end{figure}

\subparagraph*{Synthetic Data.} We also tested this algorithm using synthetic data and found similar results, detailed in \cref{fig:phyloexpr}. We detail the method used to generate these random tree instances in \cref{sec:recpathexp}, however, given our algorithm's strict focus on biological phylogenetic trees, we use only full binary trees, where each internal node has exactly two children.

\subsection{Reconstructing Phylogenetic Trees from Path Queries}
\label{sec:recpathexp}
To assess the practical performance of our method for reconstructing
(biological and digital) phylogenetic trees from path queries,
we performed experiments using both synthetic and real data to compare 
our algorithm with the algorithm by Wang and Honorio~\cite{DBLP:conf/allerton/WangH19}, which is the best known reconstruction algorithm for phylogenetic trees from path queries.
Our experimental results provide evidence that \cref{alg:rooted-reconstruct} provides significant
parallel speedup,
while simultaneously improving the total number of queries.

\subparagraph*{Synthetic Data.}~In order to generate random instances of trees with maximum degree, $d$,
we synthesized a data set of random degree-$d$
trees of $n$ nodes for different values of $n$ and $d$. 
To generate a random tree, $T$, for a given $n$ and $d$,
we first generated a random Prüfer sequence \cite{prufer1918neuer} of a labeled
tree, which defines a unique sequence associated with that tree, and 
converted it to its associated tree. 
In particular, each $n$-node
tree $T= (V,E)$ has a unique code sequence 
$s_1, s_2, \dots , s_{n-2}$, where 
$s_i \in V$ for all $ 1\le i \le n-2$ and every node
$v_i \in V$ of degree $d_i$ appears exactly $d_{i}-1$ times in this
sequence. 
Therefore, in order to generate random degree-$d$ rooted
trees we generate a random Prüfer sequence while imposing conditions
that: (i) all vertices appear at most $d-1$ times in the code and
(ii) there is at least one node such that it appears exactly $d-1$
times. 
Converting each such sequence to its associated tree gave us
a random degree-$d$ tree instance that we used in our experiments.

Since our parallel reconstruction algorithm using
path queries is parameterized by a constant, $C_2$,  we ran our algorithm using different values for $C_2$. The constant $C_2$ controls sample size from $V$ used to
estimate the number of descendants of a node. Furthermore, to reduce noise from randomization, each data-point will be averaged for 3 runs on 10 randomly generated trees. In \cref{fig:fig_diff_n}, we compare our algorithm's rounds and total number of queries with the one by Wang and Honorio~\cite{DBLP:conf/allerton/WangH19}, for fixed degree trees $d=5$ and varying tree-sizes. 
These results provide empirical evidence that
our algorithm provides a noticeable speedup
in parallel round complexity while also outperforming the algorithm
by Wang and Honorio~\cite{DBLP:conf/allerton/WangH19} 
in total number of queries.

\begin{figure}[t]
\centering
\begin{subfigure}{.49\textwidth}
  \centering
  \includegraphics[width=\linewidth,trim= 0 4mm 0 3mm,clip]{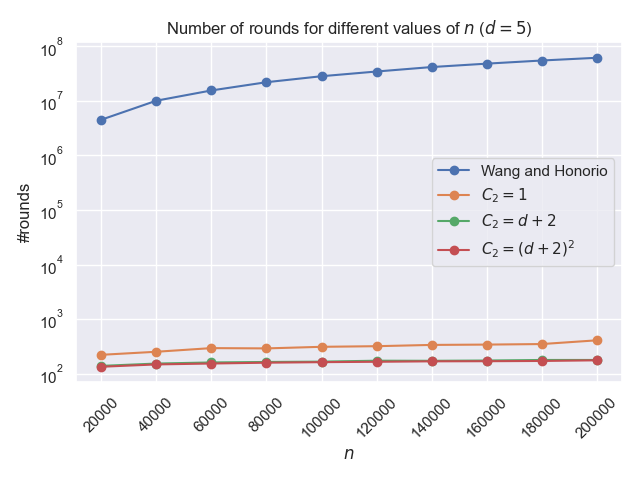}  
\end{subfigure}
\begin{subfigure}{.49\textwidth}
  \centering
  \includegraphics[width=\linewidth,trim= 0 4mm 0 3mm,clip]{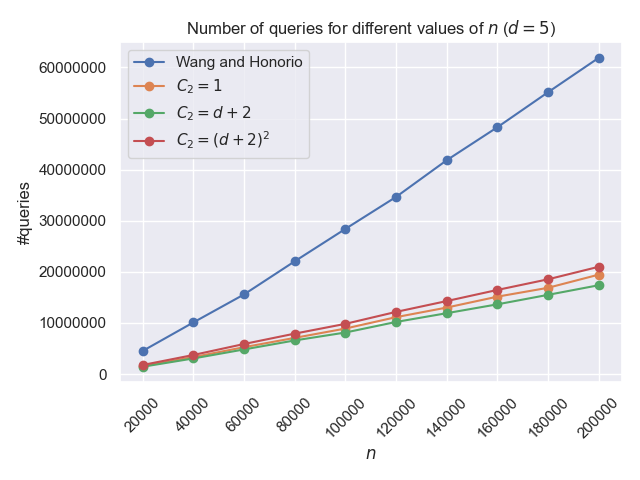}  
\end{subfigure}
\caption{Comparing Our Algorithm's number of rounds (left) and total queries (right) with Wang and Honorio's \cite{DBLP:conf/allerton/WangH19}, for fixed $d=5$ and varying $n$. }
\label{fig:fig_diff_n}
\end{figure}

\begin{figure}[t]
\centering
\begin{subfigure}{.49\textwidth}
  \centering
  \includegraphics[width=\linewidth,trim= 0 4mm 0 3mm,clip]{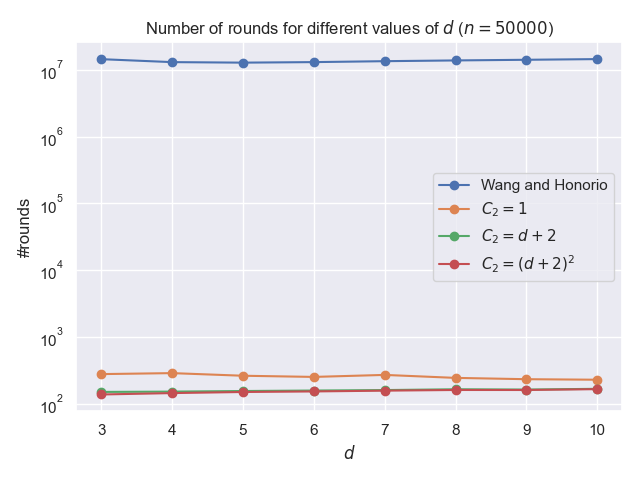}  
\end{subfigure}
\begin{subfigure}{.49\textwidth}
  \centering
  \includegraphics[width=\linewidth,trim= 0 4mm 0 3mm,clip]{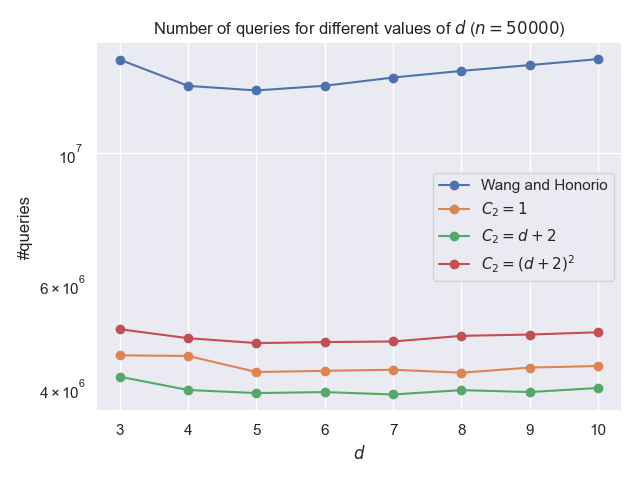}  
\end{subfigure}
\caption{Comparing our algorithm's number of rounds (left) and total queries (right) with Wang and Honorio's \cite{DBLP:conf/allerton/WangH19}, for $n=50000$ and varying values for $d$. } 
\label{fig:fig_diff_d}
\end{figure}

In \cref{fig:fig_diff_d}, we compare \cref{alg:rooted-reconstruct} with the one by Wang and
Honorio \cite{DBLP:conf/allerton/WangH19} for fixed size and varying values
of $d$. Again, this supports our theoretical
findings that our algorithm 
achieves both a significant parallel speedup and
a simultaneous improvement in the number of total queries. 

\begin{figure}[t]
  \centering
\begin{subfigure}{.49\textwidth}
  \centering
  \includegraphics[width=\linewidth,trim= 0 4mm 0 3mm,clip]{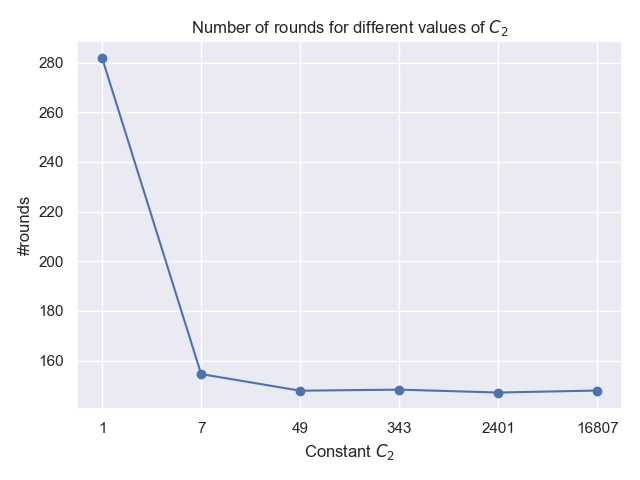}  
\end{subfigure}
\begin{subfigure}{.49\textwidth}
  \centering
  \includegraphics[width=\linewidth,trim= 0 4mm 0 3mm,clip]{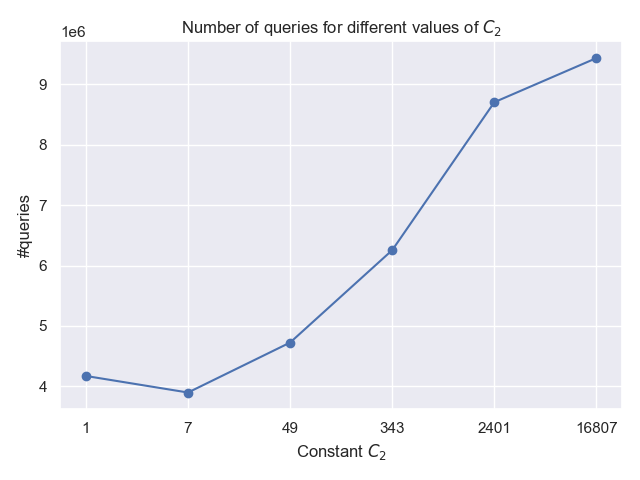}  
\end{subfigure}
\caption{Change in the total number of rounds (left) and total number of queries (right) when running our algorithm for varying values of $C_2$  ($n=50000$, $d=5$).} 
\label{fig:fig_diff_c}
\end{figure}

In \cref{fig:fig_diff_c}, we study the behavior of \cref{alg:rooted-reconstruct} under different values of $C_2$, so
as to experimentally find the
best value for $C_2$. 
While our high probability analysis requires $C_2 \approx (d+2)^4$,  \cref{fig:fig_diff_c} suggests that we do not need that 
high probability reassurance in practice, and we can use smaller sample  to
reduce the total number of queries.

\subparagraph*{Real Data.}~Our experiments on real-world biological phylogenetic trees also confirm the superiority of our algorithm in terms of performance as compared to the one by Wang and Honorio \cite{DBLP:conf/allerton/WangH19}. Similar to our experiments using relative-distance queries, we used a dataset comprised of trees from the phylogenetic library TreeBase~\cite{piel2009treebase}. \Cref{fig:phyloexp_path_queries} summarizes our experimental results, where each data point corresponds to an average performance of 3 runs on the same tree. Our algorithm is superior in both queries and rounds for all the values of $C_2$ we tried: $C_2 \in \{1, d+2, (d+2)^2\}$. The best performance corresponds to $C_2=d+2=5$, which is the one illustrated in \cref{fig:phyloexp_path_queries}.

\begin{figure}[bt]
\centering
\includegraphics[width=.7\linewidth,trim=0 4mm 0 3mm,clip]{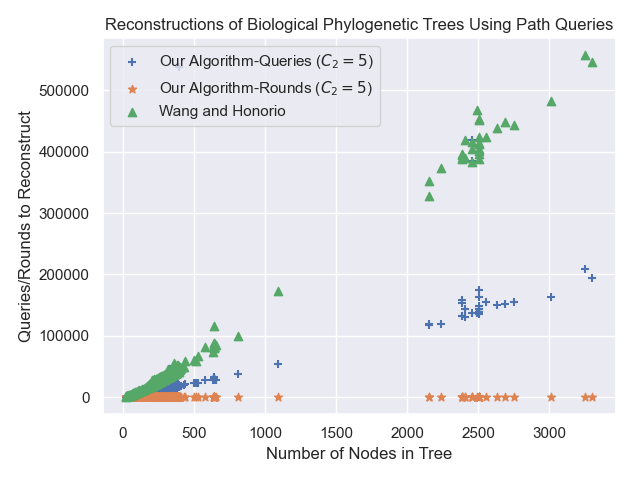}
\caption{A scatter plot comparing the number of queries and rounds of our algorithm and with the one by Wang and Honorio \cite{DBLP:conf/allerton/WangH19} for real-world trees from TreeBase~\cite{piel2009treebase}. Since our algorithm is parallel, we include round complexity to serve as a comparison for the sequential complexity.}
\label{fig:phyloexp_path_queries}
\end{figure}

\clearpage
\bibliographystyle{plainurl}
\bibliography{all-refs}

\clearpage

\appendix
\section{Probabilistic Analysis for Tree Reconstruction using Path Queries}

\label{appendix:chernoff}
Here, we give the Chernoff-bound analysis that we omitted in the proof of 
lemma~\ref{lem:expected_d_times}.

\begin{lemma}
There exists a constant $C_2 > 0$, as used $K = C_2 \log {|V|}$in line~\ref{split_3} of 
Algorithm~\ref{alg:splitting-edge-finding}, such that if we take a sample $X$ of size $K$ from $V$, the following probability bounds always hold:

\setcounter{equation}{0}
\begin{equation}\label{eq:1p}
  \begin{cases}
        
        \Pr\left(count(s,X) \geq \frac{K}{d+1}\right) \geq 1 - \frac{1}{|V|^2} & \mbox{if }count(s,V) \geq \frac{|V|}{d}, \\ 
        \Pr\left(count(s,X) \leq K  \frac{d}{d+1}\right) \geq 1 - \frac{1}{|V|^2} & \mbox{if } count(s,V) \leq |V|  \frac{d-1}{d},\\
               \end{cases}
 \end{equation}
 \begin{equation}\label{eq:2p}
  \begin{cases}
 
        \Pr\left(count(s,X) < \frac{K}{d+1}\right) \geq 1 - \frac{1}{|V|^2} & \mbox{if }count(s,V) < \frac{|V|}{d+2}, \\
        \Pr\left(count(s,X) > K  \frac{d}{d+1}\right) \geq 1 - \frac{1}{ |V|^2} & \mbox{if }count(s,V) > |V|  \frac{d+1}{d+2} \\
        
        \end{cases}
 \end{equation}

\end{lemma}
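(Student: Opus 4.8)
The plan is to recognize $count(s,X)$ as a sum of $K$ indicator variables and apply a standard concentration inequality. Writing $p := count(s,V)/|V| = D(s)/|V|$ for the fraction of descendants of $s$ in $V$, each of the $K$ sampled vertices is a descendant of $s$ with probability exactly $p$, so $count(s,X)$ has mean $\mu = Kp$; if $X$ is drawn with replacement these indicators are independent and $count(s,X)$ is $\mathrm{Binomial}(K,p)$, and if $X$ is drawn without replacement it is hypergeometric, which by Hoeffding's observation is at least as concentrated about $\mu$. In either model Hoeffding's inequality gives $\Pr\bigl(count(s,X) \le \mu - \lambda\bigr) \le e^{-2\lambda^2/K}$ and $\Pr\bigl(count(s,X) \ge \mu + \lambda\bigr) \le e^{-2\lambda^2/K}$ for every $\lambda \ge 0$, so it suffices to check in each of the four cases that the stated threshold lies strictly on the correct side of $\mu$ and to lower-bound the gap $\lambda$.

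Next I would dispatch the four cases by elementary algebra. For the first bound, $count(s,V)\ge |V|/d$ forces $\mu \ge K/d > K/(d+1)$, so the failure event $count(s,X) < K/(d+1)$ is a lower-tail deviation with $\lambda \ge K/d - K/(d+1) = K/\bigl(d(d+1)\bigr)$; for the second, $count(s,V) \le |V|(d-1)/d$ forces $\mu \le K(d-1)/d < Kd/(d+1)$ (using $d^2 > (d-1)(d+1)$), an upper-tail deviation with the same gap $Kd/(d+1) - K(d-1)/d = K/\bigl(d(d+1)\bigr)$. For the third bound, $count(s,V) < |V|/(d+2)$ forces $\mu < K/(d+2) < K/(d+1)$, an upper-tail deviation with $\lambda > K/(d+1) - K/(d+2) = K/\bigl((d+1)(d+2)\bigr)$; for the fourth, $count(s,V) > |V|(d+1)/(d+2)$ forces $\mu > K(d+1)/(d+2) > Kd/(d+1)$ (using $d(d+2) < (d+1)^2$), a lower-tail deviation with $\lambda > K(d+1)/(d+2) - Kd/(d+1) = K/\bigl((d+1)(d+2)\bigr)$. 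In all four cases Hoeffding's bound therefore caps the failure probability by $e^{-2K/((d+1)(d+2))^2}$.

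Finally I would choose the constant: taking $K = C_2\log|V|$ with $C_2$ a sufficiently large function of $d$ — any $C_2 \ge (d+2)^4$ works, since $\bigl((d+1)(d+2)\bigr)^2 \le (d+2)^4$ — makes $e^{-2K/((d+1)(d+2))^2} \le 1/|V|^2$, which is exactly the claim. I do not expect a real obstacle here; the only points requiring care are (i) verifying that each threshold sits strictly on the proper side of $\mu$, i.e.\ the inequalities $d^2 > (d-1)(d+1)$ and $d(d+2) < (d+1)^2$, and (ii) justifying the concentration bound in the without-replacement sampling model, for which citing Hoeffding is enough. If one prefers the multiplicative Chernoff form, the same argument runs with relative deviations $\delta = 1/(d+1)$ (cases~1,2) and $\delta = 1/(d+2)$ (cases~3,4), changing only the resulting value of $C_2$.
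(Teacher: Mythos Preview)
Your proof is correct and in fact cleaner than the paper's. The paper uses the multiplicative Chernoff bound $\Pr[|Z-E[Z]|\ge \epsilon E[Z]]\le 2e^{-\epsilon^2 E[Z]/3}$ and treats the four cases separately with case-dependent choices of $\epsilon$ (in Cases~2--4 the paper even takes $\epsilon$ to depend on $E[Z]$ itself), then solves for the resulting constant $C_2$ in each case, obtaining a worst case on the order of $6(d+1)^2(d+2)^2$. You instead use the additive Hoeffding bound, which is the more natural tool here since the thresholds $K/(d+1)$ and $Kd/(d+1)$ are stated additively, and you unify all four cases by showing the gap is always at least $K/((d+1)(d+2))$; this yields the same order $C_2\approx (d+2)^4$ with considerably less bookkeeping. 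Your remark about the without-replacement model (hypergeometric via Hoeffding's comparison) is also a point the paper does not address explicitly. One small caveat: your closing parenthetical that the multiplicative form runs with $\delta=1/(d+1)$ in Cases~1,2 and $\delta=1/(d+2)$ in Cases~3,4 is a bit optimistic---in Case~2, for instance, a uniform $\delta$ of that size does not quite work because $\mu$ can be as large as $K(d-1)/d$, which is why the paper resorts to an $E[Z]$-dependent $\epsilon$ there---but this does not affect your main Hoeffding argument.
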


\begin{proof}

Recall that $\tilde{D}(s)=count(s,X) \frac{|V|}{K} $ is an estimation of $D(s)=count(s,V)$, the number of descendants of
vertex $s$ in algorithm~\ref{alg:splitting-edge-finding}. For simplicity, we use $D(s)$ and $\tilde{D}(s)$ in the proof. Let $Z$ be sum of independent binary random variables with
expected value $E[Z]$. Using a Chernoff bound, we know that $\Pr\left[\big|Z - E[Z]\big| \geq \epsilon  E[Z]\right] \leq 2  e^{\frac{-1}{3}  \epsilon^{2}  E[Z]}$: 

In this case, our random variable is $Z = \tilde{D}(s) \frac{K}{|V|}$ and $E[Z] = D(s) \frac{K}{|V|}$. By reformulating a Chernoff bound, we have
\begin{equation} \label{eq:3}
  \Pr \left[\big|Z - D(s)  \frac{K}{|V|}\big| \geq \epsilon  D(s)  \frac{K}{|V|}\right] \leq 2  e^{\frac{-1}{3}  \epsilon^{2}  D(s)  \frac{K}{|V|}}  
\end{equation}

Now, we find the value of $C_2$ used in line~\ref{split_3} of algorithm~\ref{alg:splitting-edge-finding} to compute $K$, the size of the sample. We do this for each of the 4 cases distinguished in equations~\ref{eq:1},\ref{eq:2}:

 \textbf{Case 1}:  We want to prove that if $D(s) \geq \frac{|V|}{d} $, 
 
 then  $\Pr\left[\tilde{D}(s) \geq \frac{|V|}{(d+1)}\right] \geq 1 - \frac{1}{|V|^2}$:

Suppose $D(s) \geq \frac{|V|}{d}$; we prove that $\Pr\left[\tilde{D}(s) < \frac{|V|}{d+1}\right] < \frac{1}{|V|^2}$. 

If we set $\epsilon = \frac{1}{d+1}$, we show that 
\begin{equation}\label{eq:4}
   \Pr\left[\tilde{D}(s) < \frac{|V|}{d+1}\right] \leq \Pr\left[\left|Z - D(s)  \frac{K}{|V|}\right| \geq \epsilon  D(s)  \frac{K}{|V|}\right] 
\end{equation}

In order to prove this, given the facts that $\epsilon = \frac{1}{d+1}$, and $D(s) \geq \frac{|V|}{d}$, we show that, for any $\tilde{D}(s)$ such that the inequality $\tilde{D}(s) < \frac{|V|}{d+1}$ holds, then the inequality $\left[\left|Z - D(s)  \frac{K}{|V|}\right| \geq \epsilon  D(s)  \frac{K}{|V|}\right]$ also holds.

\begin{align*}
& \frac{d}{d+1}   D(s)  \frac{K}{|V|} \geq  \frac{K}{d+1}> Z && \left(D(s) \ge \frac{|V|}{d}, Z < \frac{K}{d+1} \right) \\ 
\Longrightarrow & \left[\left(1 - \frac{1}{d+1}\right)  \left( D(s)  \frac{K}{|V|} \right) \geq Z  \right] \\
\Longrightarrow & \left[\left(D(s)  \frac{K}{|V|} - Z\right) \geq \frac{1}{d+1}  D(s)  \frac{K}{|V|}\right]\\
\Longrightarrow & \left[\left|Z - D(s)  \frac{K}{|V|}\right| \geq \epsilon  D(s)  \frac{K}{|V|}\right] && \left(\epsilon = \frac{1}{d+1} \right)
\end{align*}

Thus, inequality~\ref{eq:4} is true. Combining inequalities \ref{eq:3} and \ref{eq:4}, we have that:

$$
\Pr\left[\tilde{D}(s) < \frac{|V|}{d+1}\right] \le 2  e^{\frac{-1}{3}  \epsilon^{2}  D(s)  \frac{K}{|V|}}
$$

Now, we find the value of $C_2$, such that for $K = C_2   \log{|V|}$:

$$
  2  e^{\frac{-1}{3}  \epsilon^{2}  D(s)  \frac{K}{|V|}} < \frac{1}{|V|^2}
$$

Taking logarithm on both sides and given that $D(s) \geq \frac{|V|}{d}$  and $\epsilon = \frac{1}{d+1}$, we have that:

\begin{gather*}
   \frac{1}{3 }  \epsilon^{2}  D(s)  \frac{K}{|V|} \geq \frac{1}{3}  \left(\frac{1}{d+1}\right)^{2}  \frac{|V|}{|d|}  \frac{K}{|V|}
   = \frac{1}{3}  \left(\frac{1}{d+1}\right)^{2}  \frac{K}{d}  > 2 \ln{(2  |V|)} \\
  \Longleftrightarrow K > 6  d  (d+1)^2  \ln{(2  |V|)} 
  \\ \xLongleftrightarrow{K = C_2  \log{|V|}}
   C_2 > \frac{6  d  (d+1)^2  \ln{(2  |V|)}}{  \log{|V|}}  
\end{gather*}

Thus, $C_2$ is not more than a constant.

\noindent\textbf{Case 2:} We want to prove that if $D(s) \leq \frac{|V|  (d-1)}{d}$, 

then  $\Pr\left[\tilde{D}(s) \leq \frac{|V|  d}{(d+1)}\right] \geq 1 - \frac{1}{|V|^2}$:

Suppose $D(s) \leq \frac{|V|  (d-1)}{d}$; we prove that $\Pr\left[\tilde{D}(s)>\frac{|V|  d}{(d+1)}\right] < \frac{1}{|V^2}$.

Reminding that $Z= \tilde{D}(s)  \frac{K}{|V|}$, if we set $\epsilon = \frac{K}{d  (d+1)  E[Z]}$, we show:

\begin{equation}\label{eq:5}
   \Pr\left[\tilde{D}(s) > \frac{\left|V\right|  d}{\left(d+1\right)}\right] \leq \Pr\left[\left|Z - D(s)  \frac{K}{|V|}\right| \geq \epsilon  D(s)  \frac{K}{|V|}\right] 
\end{equation}

In order to prove this, given the facts that  $\epsilon = \frac{K}{d  (d+1)  E[Z]}$, and $D(s) \leq \frac{|V|  \left(d-1\right)}{d}$, we show that, for any $\tilde{D}(s)$ such that the inequality $Z = \tilde{D}(s)  \frac{K}{|V|} > \frac{K  d}{d+1}$ holds, then the inequality $\left[\left|Z - D(s)  \frac{K}{|V|}\right| \geq \epsilon  D(s)  \frac{K}{|V|}\right]$ also holds.

Given that $Z> \frac{K  d}{d+1}$ and that $D(s) \leq \frac{|V|  \left(d-1\right)}{d}$, we have:

\begin{gather*}
   \left|Z - D(s)  \frac{K}{|V|}\right| = \left(Z - D(s)  \frac{K}{|V|}\right) \\ > \frac{K  d}{d+1} - \frac{|V|  \left(d-1\right)}{d}  \frac{K}{|V|} = \frac{K}{d  \left( d+1\right)} 
\end{gather*}

Therefore, using the facts that $\epsilon = \frac{K}{d  \left( d+1\right)  E[Z]}$ and that $E[Z]=D(s)  \frac{K}{|V|}$, we can say:
$$  \left|Z - D(s)  \frac{K}{|V|}\right| \geq \epsilon  D(s)  \frac{K}{|V|} $$

Thus, inequality~\ref{eq:5} is true. Combining inequalities \ref{eq:3} and \ref{eq:5}, we have:

$$
\Pr\left[\tilde{D}(s) > \frac{|V|  d}{\left(d+1\right)}\right] \le 2  e^{\frac{-1}{3}  \epsilon^{2}  D(s)  \frac{K}{|V|}}
$$

Now, we find the value of $C_2$, such that for $K = C_2 \log{|V|}$:

$$
  2  e^{\frac{-1}{3}  \epsilon^{2}  D(s)  \frac{K}{|V|}} < \frac{1}{|V|^2}
$$

Taking logarithm on both sides and given that $\epsilon  E[Z]  = \frac{K}{d  (d+1)}$ and $E[Z] = D(s) \frac{K}{|V|} \leq \frac{K  (d-1)}{d}$, we can obtain $\epsilon \geq \frac{1}{(d-1)  (d+1)}$, therefore:

\begin{gather*}
   \frac{1}{3 }  \epsilon^{2}  D(s)  \frac{K}{|V|} \geq \frac{1}{3}  \frac{1}{(d-1)  (d+1)}  \frac{|K|}{d  (d+1)} > 2 \ln{(2  |V|)} \\ 
  \Longleftrightarrow   K > 6  d  (d-1)  (d+1)^2  \ln{(2  |V|)} \\
   \xLongleftrightarrow{K = C_2 \log{|V|}} 
   C_2 > \frac{6  d \dot (d-1)  (d+1)^2  \ln{(2  |V|)}}{  \log{|V|}}  
\end{gather*}

Thus, $C_2$ is not more than a constant.

\noindent\textbf{Case 3:} We want to show that if $D(s) < \frac{|V|}{d+2}$, 

then  $\Pr\left[\tilde{D}(s) < \frac{|V|}{(d+1)}\right] \geq 1 - \frac{1}{|V|^2}$:

Suppose $D(s) < \frac{|V|}{d+2}$; we prove that $\Pr\left[\tilde{D}(s) \geq \frac{|V|}{d+1}\right] < \frac{1}{|V|^2}$.

Reminding that $Z = \tilde{D}(s)  \frac{K}{|V|}$, if we set $\epsilon = \frac{K}{(d+1)  (d+2)  E[Z]}$, we show:

\begin{equation}\label{eq:6}
   \Pr\left[\tilde{D}(s) \geq \frac{\left|V\right| }{d+1}\right] \leq \Pr\left[\left|Z - D(s)  \frac{K}{|V|}\right| \geq \epsilon  D(s)  \frac{K}{|V|}\right] 
\end{equation}

In order to prove this, given the facts that $\epsilon = \frac{K}{(d+1)  (d+2)  E[Z]}$, and $D(s) < \frac{|V|}{d+2}$ , we show that, for any $\tilde{D}(s)$ such that the inequality $Z = \tilde{D}(s)  \frac{K}{|V|} \geq \frac{K}{d+1}$ holds, then the inequality $\left[\left|Z - D(s)  \frac{K}{|V|}\right| \geq \epsilon  D(s)  \frac{K}{|V|}\right]$ also holds.

Given that $Z \geq \frac{K}{d+1}$ and that $D(s) < \frac{|V|}{d+2}$, we can say:

\begin{gather*}
    \left|Z - D(s)  \frac{K}{|V|}\right| = \left(Z - D(s)  \frac{K}{|V|}\right) \\ > \frac{K}{d+1} - \frac{|V|}{d+2}  \frac{K}{|V|} = \frac{K}{\left(d+1\right)  \left( d+2\right)}
\end{gather*}

Therefore, using the facts that $\epsilon = \frac{K}{\left(d+1\right)  \left( d+2\right)  E[Z]}$ and that $E[Z]=D(s)  \frac{K}{|V|}$, we have:
$$  \left|Z - D(s)  \frac{K}{|V|}\right| \geq \epsilon  D(s)  \frac{K}{|V|} $$

Thus, inequality~\ref{eq:6} is true. Combining inequalities \ref{eq:3} and \ref{eq:6}, we can say:

$$
\Pr\left[\tilde{D}(s) \geq \frac{|V|}{d+1}\right] \le 2  e^{\frac{-1}{3}  \epsilon^{2}  D(s)  \frac{K}{|V|}}
$$

Now, we find the value of $C_2$, such that for $K = C_2  \log{|V|}$:

$$
  2  e^{\frac{-1}{3}  \epsilon^{2}  D(s)  \frac{K}{|V|}} < \frac{1}{|V|^2}
$$

Taking logarithm on both sides and given that 

$\epsilon  E[Z]= \frac{K}{(d+1)  (d+2)}$ and $E[Z] = D(s) \frac{K}{|V|} < \frac{K}{d+2}$, we can obtain $\epsilon > \frac{1}{(d+1)}$, therefore:

\begin{gather*}
   \frac{1}{3 }  \epsilon^{2}  D(s)  \frac{K}{|V|} \geq \frac{1}{3}  \frac{1}{(d+1)}  \frac{|K|}{(d+2)  (d+1)} > 2 \ln{(2  |V|)} \\ 
  \Longleftrightarrow  K > 6  (d+2)  (d+1)^2  \ln{(2  |V|)} \\
   \xLongleftrightarrow{K = C_2   \log{|V|}} 
   C_2 > \frac{6  (d+2)  (d+1)^2  \ln{(2  |V|)}}{  \log{|V|}}  
\end{gather*}

Thus, $C_2$ is not more than a constant.

\noindent\textbf{Case 4:} We want to prove that if $D(s) > \frac{|V|  (d+1)}{d+2}$, 

then  $\Pr\left[\tilde{D}(s) > \frac{|V|  d}{(d+1)}\right] \geq 1 - \frac{1}{|V|^2}$:

Suppose $D(s) > \frac{|V|  (d+1)}{d+2}$; we prove that $\Pr\left[\tilde{D}(s) \leq \frac{|V|  d}{(d+1)}\right] \geq 1 - \frac{1}{|V|^2}$.

Reminding that $Z= \tilde{D}(s)  \frac{K}{|V|}$, if we set $\epsilon = \frac{K}{(d+1)  (d+2)  E[Z]}$, we show:

\begin{equation}\label{eq:7}
   \Pr\left[\tilde{D}(s) \leq \frac{|V|  d}{(d+1)}\right] \leq \Pr\left[\left|X - D(s)  \frac{K}{|V|}\right| \geq \epsilon  D(s)  \frac{K}{|V|}\right] 
\end{equation}

In order to prove this, given the facts that $\epsilon = \frac{K}{(d+1)  (d+2)  E[Z]}$, and $D(s) > \frac{|V|  \left(d+1\right)}{d+2}$ , we show that, for any $\tilde{D}(s)$ such that the inequality $Z = \tilde{D}(s)  \frac{K}{|V|} \leq \frac{K  d}{d+1}$ holds, then the inequality $\left[\left|Z - D(s)  \frac{K}{|V|}\right| \geq \epsilon  D(s)  \frac{K}{|V|}\right]$ also holds.

Given that $Z \leq \frac{K  d}{d+1}$ and that $D(s) > \frac{|V|  (d+1)}{d+2}$, we can say:

\begin{gather*}
  \left|X - D(s)  \frac{K}{|V|}\right| = \left(D(s)  \frac{K}{|V|} - Z \right) \\ > \frac{|V|  (d+1)}{d+2}  \frac{K}{|V|} - \frac{K  d}{d+1}= \frac{K}{\left(d+1\right)  \left( d+2\right)}  
\end{gather*}

Therefore, using the facts that $\epsilon = \frac{K}{\left(d+1\right)  \left( d+2\right)  E[Z]}$ and that $E[Z]=D(s)  \frac{K}{|V|}$, we have:
$$  \left|Z - D(s)  \frac{K}{|V|}\right| \geq \epsilon  D(s)  \frac{K}{|V|} $$

Thus, inequality~\ref{eq:7} is true. Combining inequalities \ref{eq:3} and \ref{eq:7}, we can see:

$$
\Pr\left[\tilde{D}(s) \leq \frac{|V|  d}{(d+1)}\right] \le 2  e^{\frac{-1}{3}  \epsilon^{2}  D(s)  \frac{K}{|V|}}
$$

Now, we find the value of $C_2$, such that for $K = C_2  \log{|V|}$:

$$
  2  e^{\frac{-1}{3}  \epsilon^{2}  D(s)  \frac{K}{|V|}} < \frac{1}{|V|^2}
$$

Taking logarithm on both sides and given that 

$\epsilon  E[Z]  = \frac{K}{(d+1)  (d+2)}$ and $E[Z] = D(s) \frac{K}{|V|} \leq K$, we can obtain $\epsilon \geq \frac{1}{(d+1)  (d+2)}$, therefore:

\begin{gather*}
   \frac{1}{3 }  \epsilon^{2}  D(s)  \frac{K}{|V|} \geq \frac{1}{3}  \frac{1}{(d+1)  (d+2)}  \frac{|K|}{(d+2)  (d+1)} > 2 \ln{(2  |V|)} \\ 
  \Longleftrightarrow  K > 6  (d+2)^2  (d+1)^2  \ln{(2  |V|)} \\
   \xLongleftrightarrow{K = C_2 \log{|V|}} 
   C_2 > \frac{6  (d+2)^2  (d+1)^2  \ln{(2  |V|)}}{  \log{|V|}}  
\end{gather*}

Thus, $C_2$ is not more than a constant.

Therefore, it's enough to choose $C_2$ as the maximum of these 4 constants at the beginning of the algorithm.

\end{proof}

\end{document}